\PassOptionsToPackage{unicode}{hyperref}
\PassOptionsToPackage{hyphens}{url}
\PassOptionsToPackage{dvipsnames,svgnames,x11names}{xcolor}
\documentclass[
  article, nojss, nofooter]{jss}

\usepackage{amsmath,amssymb}
\usepackage{iftex}
\ifPDFTeX
  \usepackage[T1]{fontenc}
  \usepackage[utf8]{inputenc}
  \usepackage{textcomp} % provide euro and other symbols
\else % if luatex or xetex
  \usepackage{unicode-math}
  \defaultfontfeatures{Scale=MatchLowercase}
  \defaultfontfeatures[\rmfamily]{Ligatures=TeX,Scale=1}
\fi
\usepackage{lmodern}
\ifPDFTeX\else  
\fi
\IfFileExists{upquote.sty}{\usepackage{upquote}}{}
\IfFileExists{microtype.sty}{% use microtype if available
  \usepackage[]{microtype}
  \UseMicrotypeSet[protrusion]{basicmath} % disable protrusion for tt fonts
}{}
\makeatletter
\@ifundefined{KOMAClassName}{% if non-KOMA class
  \IfFileExists{parskip.sty}{%
    \usepackage{parskip}
  }{% else
    \setlength{\parindent}{0pt}
    \setlength{\parskip}{6pt plus 2pt minus 1pt}}
}{% if KOMA class
  \KOMAoptions{parskip=half}}
\makeatother
\usepackage{xcolor}
\ifx\paragraph\undefined\else
  \let\oldparagraph\paragraph
  \renewcommand{\paragraph}[1]{\oldparagraph{#1}\mbox{}}
\fi
\ifx\subparagraph\undefined\else
  \let\oldsubparagraph\subparagraph
  \renewcommand{\subparagraph}[1]{\oldsubparagraph{#1}\mbox{}}
\fi

\providecommand{\tightlist}{%
  \setlength{\itemsep}{0pt}\setlength{\parskip}{0pt}}\usepackage{longtable,booktabs,array}
\usepackage{calc} % for calculating minipage widths
\usepackage{etoolbox}
\makeatletter
\patchcmd\longtable{\par}{\if@noskipsec\mbox{}\fi\par}{}{}
\makeatother
\IfFileExists{footnotehyper.sty}{\usepackage{footnotehyper}}{\usepackage{footnote}}
\makesavenoteenv{longtable}
\usepackage{graphicx}
\makeatletter
\def\maxwidth{\ifdim\Gin@nat@width>\linewidth\linewidth\else\Gin@nat@width\fi}
\def\maxheight{\ifdim\Gin@nat@height>\textheight\textheight\else\Gin@nat@height\fi}
\makeatother
\setkeys{Gin}{width=\maxwidth,height=\maxheight,keepaspectratio}
\makeatletter
\def\fps@figure{htbp}
\makeatother

\usepackage{orcidlink}
\newtheorem{proposition}{Proposition}
\makeatletter
\@ifpackageloaded{caption}{}{\usepackage{caption}}
\AtBeginDocument{%
\ifdefined\contentsname
  \renewcommand*\contentsname{Table of contents}
\else
  \newcommand\contentsname{Table of contents}
\fi
\ifdefined\listfigurename
  \renewcommand*\listfigurename{List of Figures}
\else
  \newcommand\listfigurename{List of Figures}
\fi
\ifdefined\listtablename
  \renewcommand*\listtablename{List of Tables}
\else
  \newcommand\listtablename{List of Tables}
\fi
\ifdefined\figurename
  \renewcommand*\figurename{Figure}
\else
  \newcommand\figurename{Figure}
\fi
\ifdefined\tablename
  \renewcommand*\tablename{Table}
\else
  \newcommand\tablename{Table}
\fi
}
\@ifpackageloaded{float}{}{\usepackage{float}}
\floatstyle{ruled}
\@ifundefined{c@chapter}{\newfloat{codelisting}{h}{lop}}{\newfloat{codelisting}{h}{lop}[chapter]}
\floatname{codelisting}{Listing}

\makeatother
\makeatletter
\@ifpackageloaded{caption}{}{\usepackage{caption}}
\@ifpackageloaded{subcaption}{}{\usepackage{subcaption}}
\makeatother
\makeatletter
\@ifpackageloaded{tcolorbox}{}{\usepackage[skins,breakable]{tcolorbox}}
\makeatother
\makeatletter
\@ifundefined{shadecolor}{\definecolor{shadecolor}{rgb}{.97, .97, .97}}{}
\makeatother
\makeatletter
\ifdefined\Shaded\renewenvironment{Shaded}{\begin{tcolorbox}[frame hidden, breakable, enhanced, boxrule=0pt, borderline west={3pt}{0pt}{shadecolor}, interior hidden, sharp corners]}{\end{tcolorbox}}\fi
\makeatother
\ifLuaTeX
  \usepackage{selnolig}  % disable illegal ligatures
\fi
\usepackage{bookmark}

\IfFileExists{xurl.sty}{\usepackage{xurl}}{} % add URL line breaks if available
\hypersetup{
  pdftitle={ and : Fast Online Changepoint Detection in R and Python},
  pdfauthor={Gaetano Romano; Kes Ward; Yuntang Fan; Guillem Rigaill; Vincent Runge; Idris A. Eckley; Paul Fearnhead},
  pdfkeywords={changepoint detection, anomaly detection, online
algorithms, real-time analysis, functional pruning, R, Python},
  colorlinks=true,
  linkcolor={blue},
  filecolor={Maroon},
  citecolor={Blue},
  urlcolor={Blue},
  pdfcreator={LaTeX via pandoc}}

\author{Gaetano Romano~\orcidlink{0000-0002-7751-9017}\\Lancaster
University \And Kes Ward\\Lancaster University \AND Yuntang
Fan~\orcidlink{0009-0001-5958-5002}\\Lancaster University \And Guillem
Rigaill~\orcidlink{0000-0002-7176-7511}\\Université Paris-Saclay, CNRS,\\
INRAE, Univ Evry, \\  Institute of Plant Sciences Paris-Saclay
(IPS2) \&\\ Université Paris-Saclay, AgroParisTech,\\ INRAE, UMR MIA
Paris-Saclay \AND Vincent Runge~\orcidlink{0000-0002-4857-1799}\\LaMME,
Université Paris-Saclay,\\ CNRS, Univ Evry \And Idris A.
Eckley~\orcidlink{0000-0001-9957-2460}\\Lancaster University \AND Paul
Fearnhead~\orcidlink{0000-0002-9386-2341}\\Lancaster University}
\Plainauthor{Gaetano Romano, Kes Ward, Yuntang Fan, Guillem
Rigaill, Vincent Runge, Idris A. Eckley, Paul
Fearnhead} %% comma-separated

\Plainauthor{G. Romano, K. Ward, Y. Fan, G. Rigaill, V. Runge, I. A. Eckley, P. Fearnhead}

\title{\pkg{focus} and \pkg{focus-cpt}: Fast Online Changepoint
Detection in R and Python}
\Plaintitle{ and : Fast Online Changepoint Detection in R and
Python} %% without formatting

\Abstract{We present an \proglang{R} and \proglang{Python} package for
fast online changepoint detection in univariate and multivariate data
streams for a variety of models. The package implements the \pkg{focus}
family of algorithms, which compute the Generalised Likelihood Ratio
test for a single changepoint exactly and efficiently, with a
per-iteration cost of approximately \(\log(n)^d\) for a d-dimensional
sequence, without introducing approximations. This is achieved by
exploiting a connection between the location of the changepoint
candidates and the geometry of the data. The package supports a broad
range of models from the natural exponential family, including Gaussian,
Poisson, Binomial, Exponential and Gamma distributions, as well as a
non-parametric detector based on the empirical cumulative distribution
function and a detector for autoregressive data.}

\Keywords{changepoint detection, anomaly detection, online
algorithms, real-time analysis, functional
pruning, \proglang{R}, \proglang{Python}}
\Plainkeywords{changepoint detection, anomaly detection, online
algorithms, real-time analysis, functional pruning, R, Python}

\Address{
Gaetano Romano\\
School of Mathematical Sciences\\
Fylde College\\
Lancaster United Kingdom\\
E-mail: \email{g.romano@lancaster.ac.uk}\\
URL: \url{https://www.lancaster.ac.uk/maths/people/gaetano-romano}\\
\\~
Kes Ward\\
School of Mathematical Sciences\\
Fylde College\\
Lancaster United Kingdom\\
E-mail: \email{k.ward@lancaster.ac.uk}\\
\\~
Yuntang Fan\\
School of Mathematical Sciences\\
Fylde College\\
Lancaster United Kingdom\\
E-mail: \email{y.yuntang@lancaster.ac.uk}\\
\\~
Guillem Rigaill\\
23 Bd de France-Georges Pompidou, 91037 Évry-Courcouronnes\\
Orsay \& Palaiseau France\\
E-mail: \email{guillem.rigaill@inrae.fr}\\
\\~
Vincent Runge\\
23 Bd de France-Georges Pompidou, 91037 Évry-Courcouronnes\\
Gif-sur-Yvette France\\
E-mail: \email{vincent.runge@univ-evry.fr}\\
\\~
Idris A. Eckley\\
School of Mathematical Sciences\\
Fylde College\\
Lancaster United Kingdom\\
E-mail: \email{i.eckley@lancaster.ac.uk}\\
URL: \url{https://www.lancs.ac.uk/\textasciitilde eckley/}\\
\\~
Paul Fearnhead\\
School of Mathematical Sciences\\
Fylde College\\
Lancaster United Kingdom\\
E-mail: \email{p.fearnhead@lancs.ac.uk}\\
URL: \url{https://www.maths.lancs.ac.uk/\textasciitilde fearnhea/}\\
\\~

}

\begin{document}
\maketitle

\section{Introduction}\label{sec_introduction}

In the past decade, we have seen a resurgence of changepoint detection
approaches
\citep[in particular in the offline scenario, see][]{truong2020selective}.
The vast majority of real-world processes happen over time, and
methodologies that account for temporal shifts and anomalies in these
processes have been proven useful to many real world applications
\citep[e.g.,][]{shiryaev2002quickest, alvarez2020flight, xie2019asynchronous}.
Nowadays, with the advent of the Internet of Things, and the ubiquity of
large sensor networks that are able to monitor processes in real-time,
it is common to want to detect anomalies and changes online. For
example, real-time monitoring from engineering
\citep{alvarez2020flight, lobodzinski2025predictive}, cybersecurity
\citep{jeske2018statistical}, or self-driving cars
\citep{galceran2017multipolicy}, could require analysis of high
frequency data from potentially multiple streams to inform timely
decisions. Neuroscience \citep{jewell2020fast} and astrophysics
\citep{crupi2023searching} problems may involve processing hundreds of
observations per second to uncover novel biological and scientific
mechanisms.

Motivated by these problems, there has been a range of methodological
advances around fast and efficient statistical online changepoint
detection approaches, including, but not limited to
\cite{aue2024state, xie2023window, yu2023note, chen2022high}. Amongst
these, the \pkg{focus} algorithm, introduced in \cite{romano2023fast},
has found applications across a range of settings and has prompted
several methodological developments. These so far include an extension
to univariate exponential family models
\citep{ward2022poisson, ward2023constant}, an adaptation to
non-parametric settings \citep{romano2023log}, as well as a
generalisation to multivariate time series
\citep{pishchagina2025online}.

\pkg{focus} detects changes using the Generalised Likelihood Ratio (GLR)
test \citep{siegmund1995using}. The GLR evaluates whether the underlying
distribution generating the observations has changed at a given point in
time, by comparing the likelihood of the data under the null hypothesis
of no change against the alternative hypothesis of a change at some
unknown location, optimising over all possible changepoint locations.
This test has shown good statistical properties in terms of detection
power and false alarm control for several models
\citep[see][]{Lorden, basseville1993detection}. However, when analysing
\(n\) data points, a naive sequential implementation requires
\(\mathcal{O}(n)\) operations per time step, leading to an overall
\(\mathcal{O}(n^2)\) complexity that makes it unfeasible for true
real-time applications. The \pkg{focus} algorithm addresses this
limitation directly: it computes the GLR statistic exactly, without any
approximation, but does so with a per-iteration cost that grows only
logarithmically in the unvariate case with the length of the observed
sequence, and more generally at a \(\log(n)^d\) rate for a d-dimensional
sequence, as we describe in detail in Section \ref{sec_algorithm}.

Methodologies based on \pkg{focus} are already being trialled in diverse
areas, from detecting astrophysical events on satellites
\citep{dilillo2024gamma}, to monitoring IoT networks
\citep{yang2024communication} and digital infrastructure
\citep{romano2023log}. This paper and software package arise from the
need to provide a common interface to the \pkg{focus} family, that could
be of use to practitioners.

One particular aspect of this implementation is that the \proglang{R}
and \proglang{Python} interfaces share the same underlying
\proglang{C++} backend, implementing the algorithms, and therefore show
parity of functionalities and produce identical results. Across the two
interfaces, available on
\href{https://cran.r-project.org/web/packages/focus/index.html}{CRAN} as
\pkg{focus} and on
\href{https://pypi.org/project/focus-cpt/0.1.1/}{PyPI} as
\pkg{focus-cpt}, there are only minor differences in interface design
philosophy (e.g., the update mechanism is more functional in
\proglang{R}, while it is more object-oriented in \proglang{Python}).
For this reason, in the description of the interface in this article, we
will provide code through the \proglang{R} interface and highlight
differences in the \proglang{Python} interface where relevant in bold
(as most of the function calls are exactly the same). We will give a
detailed overview of what differs in Appendix \ref{sec_python_interface}
and have a couple of \proglang{Python} examples in Section
\ref{sec_real_time_applications}. \proglang{Python} users should
therefore not be discouraged from reading the main part of the article
as well. On the same note, a certain degree of repetition is somewhat
unavoidable in the article, as we want to provide examples of several
types of detectors across both interfaces. To save space, code dedicated
to plotting, printing, some tuning, and data preparation has been hidden
from the code chunks; however, it is fully available on GitHub at
\url{https://github.com/gtromano/unified_focus/tree/main/paper_code} as
a reproducible \href{https://quarto.org/}{Quarto} notebook.

The rest of the article is structured as follows. In Section
\ref{sec_algorithm}, we provide a high-level description of the
algorithms implemented in the package, and in Section
\ref{sec_interface}, we describe the interfaces in detail. In Section
\ref{sec_use_cases}, we provide simulated examples of common use cases
of the package. Finally, in Section \ref{sec_real_time_applications}, we
provide some examples of real-time applications of the package,
showcasing its potential in various domains.

\subsection{Related Software}\label{related-software}

A comprehensive overview of \proglang{R} and \proglang{Python} packages
for changepoint and anomaly detection is available through the CRAN Task
Views for Time Series Analysis \citep{ctv-timeseries} and Anomaly
Detection \citep{ctv-anomaly}; we refer the reader to these for a broad
survey. Here we restrict attention to statistical packages most closely
related to \pkg{focus}, focusing on likelihood-based methods and
omitting general-purpose anomaly detection toolkits.

For offline changepoint detection in \proglang{R}, we find
\pkg{changepoint} \citep{killick2014changepoint}, which implements the
PELT algorithm and binary segmentation for changes in mean, variance, or
both, in a univariate setting. For change-in-mean, implementing the fpop
and pDPA algorithms \citep{Maidstone, Rigaill}, we find the \pkg{fpopw}
package. A broad family of randomised and deterministic localised search
strategies is collected in the \pkg{breakfast} package
\citep{breakfast_pkg}, which includes wild binary segmentation, WBS2,
narrowest-over-threshold, and other methods. A further deterministic
alternative achieving near-linear computational cost while retaining
minimax-optimal estimation is seeded binary segmentation
\citep{kovacs2023seeded}, whose implementation is available on GitHub
only at \url{https://github.com/kovacssolt/SeedBinSeg}. The MOSUM-based
approach, which uses moving-sum statistics for multiple mean-change
estimation, is implemented in the \pkg{mosum} package \citep{Meier2021}.
For Bayesian offline analysis, \pkg{bcp} \citep{erdman2008bcp} provides
a model for changes in the mean of Gaussian data. In \proglang{Python},
the primary offline library is \pkg{ruptures}
\citep{truong2018ruptures}, which offers a modular and well-documented
interface covering PELT, binary segmentation, window-based methods, and
kernel cost functions across both parametric and non-parametric models.

The above packages are for detecting abrupt unstructured changes
(similar to \pkg{focus}). There are also packages that can detect
gradual or structured changes in offline data. These include the
\pkg{cpop} \citep{fearnhead2024cpop} package and the
\texttt{trandfilter} function \citep{kim2009ell_1} in the \pkg{genlasso}
\citep{arnold2016efficient} package for detecting changes-in-slope, and
the \pkg{gfpop} \citep{runge2020gfpop} package for detecting structured
changes.

For online (and sequential) changepoint detection in \proglang{R},
\pkg{ocd} \citep{chen2022ocd} implements a multiscale likelihood-ratio
procedure for high-dimensional Gaussian mean changes, with storage and
per-observation cost independent of the number of past observations, it
also provides access to several classical sequential procedures. A
Bayesian online changepoint detection algorithm is available through
\pkg{ocp} \citep{pagotto2019ocp}, which implements the algorithm of
\cite{adams2007bayesian}. In \proglang{Python}, the online changepoint
detection ecosystem is currently less consolidated than its offline
counterpart. An implementation of the same \cite{adams2007bayesian}
algorithm is available as a lightweight package on PyPI
(\url{https://pypi.org/project/bocd/}); \pkg{changepoint-online} is a
fully pythonic, early implementation of some algorithms included in
\pkg{focus-cpt} (available at
\url{https://pypi.org/project/changepoint-online/}). The proposed
implementation therefore occupies a distinct niche relative to this
landscape: unlike \pkg{ocd}, which is its closest competitor, it is
based on exact GLR statistics and naturally extends to exponential
family models and non-parametric settings
\cite[a comparison of the statistical performances of \pkg{focus} and \pkg{ocd} is available in][]{romano2023fast, pishchagina2025online};
and, crucially, compared to \pkg{changepoint-online} it provides a
unified \proglang{R} and \proglang{Python} interface backed by a shared
(and therefore easier to maintain) and faster \proglang{C++}
implementation.

\section{A Description of the Algorithm}\label{sec_algorithm}

\subsection{Model and the GLR
Statistics}\label{model-and-the-glr-statistics}

Consider a stream of observations \(\{ y_n\}_{n = 1, 2, \ldots}\), where
each \(y_n \in \mathbb{R}^d\), and define a subset of this time series
\(y_{a:b}\) as the data points \(\{ y_a, \ldots, y_b \}\) for integers
\(a < b\). We assume our observations follow a one-parameter exponential
family with density \[
\label{eq:exponential_family}
p(y_t \mid \theta) = \prod_{j = 1}^d \exp\!\left\{ -\frac{1}{2} \Big( A(\theta^{(j)}) - 2 \theta^{(j)} T(y_t^{(j)}) + B(y_t^{(j)}) \Big) \right\},
\] where \(T(y_t^{(j)}) \in \mathbb{R}\) is a sufficient statistic,
\(\theta^T = (\theta^{(1)}, \dots, \ \theta^{(d)}) \in \mathbb{R}^d\) is
the natural parameter. This formulation assumes independence across the
\(d\) univariate time series. Here
\(A(\theta)= (A(\theta^{(1)}),\ldots,A(\theta^{(d)}) ) ^ T \in \mathbb{R}^d\)
is a convex function of the natural parameter, and
\(B(y_t^{(j)})\in \mathbb{R}\) is a function of the data only. The
natural parameter, \(\theta\), is generally not the mean parameter
\(\mu\); for example, for Poisson data \(\theta = \log(\mu)\). The
second column of Table \ref{tab:expfam} shows the relationship
\(\theta = \theta(\mu)\) for each distribution supported by the package.
For the rest of this section,
\(\langle u, v \rangle = \sum_{k=1}^d u^{(k)} v^{(k)}\) will denote the
standard scalar product in \(\mathbb{R}^d\). Note that different models
can be employed for each dimension, allowing for more flexible
multivariate detection scenarios (this capability is illustrated in the
application of Section \ref{sec_meanvar}).

\begin{table}[t]
  \centering
  \resizebox{\textwidth}{!}{%
  \begin{tabular}{lllll}
    \hline
    Distribution              & $\theta(\mu)$           & $T(y)$ & $A(\theta)$          & $B(y)$ \\
    \hline
    Gaussian (change in mean)                  & $\mu$                   & $y$    & $\theta^2$           & $y^2 + \log(2\pi)$      \\
   Poisson                   & $\log\mu$               & $y$    & $2e^{\theta}$        & $2\log y!$ \\
    Binomial                  & $\log\frac{\mu}{1-\mu}$ & $y$    & $2m\log(1+e^\theta)$ & $-2\log \binom{m}{y}$   \\
    Gamma                     & $-k/\mu$                & $y$    & $-2k    \log(-\theta)$ & $-2(k-1) \log y + 2\log\Gamma(k)$ \\    
    Gaussian (change in var.) & $-1/(2\mu)$ & $y^2$ & $-\log(-\theta)$ & $\log(2\pi)$ \\                      
    \hline
  \end{tabular}
  }
  \caption{Examples of distributions from the natural exponential family supported by the package. The $\theta(\mu)$ column gives the natural parameter $\theta$ as a function of the mean parameter $\mu$; the remaining columns give the corresponding forms of the sufficient statistic $T(y)$ and the functions $A(\theta)$ and $B(y)$. Without loss of generality, the variance of the Gaussian change in mean model is assumed to be $1$, while the Gaussian change in variance model assumes that the mean is $0$. The number of trials is assumed to be $m$ for all dimensions in the Binomial model. For the Gamma model, the shape parameter $k$ is assumed to be known and equal across dimensions, and it is possible to obtain the Exponential by setting a Gamma with shape $k=1$.}
  \label{tab:expfam}
\end{table}

The objective of our algorithm is to, at time \(n\), decide whether we
have observed a change or not based on the current data \(y_{1:n}\)
obtained so far. To this end, our model assumes that each data point
\(y_t\) is drawn i.i.d. from a distribution with density
\(p(y_t \mid \theta_0)\) before the change and \(p(y_t \mid \theta_1)\)
after. Under this model, for any candidate changepoint
\(\tau \in \{1,\dots,n-1\}\), define the log-likelihood, assuming a
change at time \(\tau\), as \[
q_{\tau,n}(\theta_0,\theta_1)
= \sum_{t=1}^{\tau} \log p(y_t \mid \theta_0) + \sum_{t=\tau+1}^{n} \log p(y_t \mid \theta_1).
\] Then, at time \(n\), maximising over all possible changepoint
locations yields the envelope \[
\mathcal{Q}_n(\theta_0,\theta_1) = \max_{\tau\in\{1,\dots,n-1\}} q_{\tau,n}(\theta_0,\theta_1).
\] The generalised likelihood ratio (GLR) statistic for testing the
presence of a single changepoint in \(y_{1:n}\) is then \[
f_{\mathrm{GLR}}(y_{1:n}) = 2\left\{ \max_{\theta_0,\theta_1}\mathcal{Q}_n(\theta_0,\theta_1)
- \max_{\theta} \sum_{t=1}^{n} \log p(y_t \mid \theta)
\right\}.
\] The optimisation to obtain \(f_{\mathrm{GLR}}\) is over both
potential changepoint locations \(\tau\) and pre- and post-change
parameters \(\theta_0\) and \(\theta_1\). Notably, it is possible to
consider the case where the pre-change parameter \(\theta_0\) is known
by simply using a plug-in value for \(\theta_0\): this version of the
optimisation, in the Gaussian case, will lead to the well known
Page-CUSUM statistics \citep{Page}.

Evaluating the GLR statistic at every time step naively requires
\(\mathcal{O}(n)\) operations per step when processing the \(n\)th
observation, due to the maximum operator over \(n-1\) candidate change
locations, which is infeasible for real-time applications. A key insight
from \cite{romano2023fast, pishchagina2025online} is that it is possible
to reduce this set of locations without losing optimality, and this
operation is often referred to as \emph{pruning}. This is achieved by
decoupling the maximisation of the evaluation of the statistics
(e.g.~\(\max_{\theta_0,\theta_1}\)) from the search for candidate change
locations to consider. These two operations are reflected in the code
via the two functions:

\begin{enumerate}
\def\labelenumi{\arabic{enumi}.}
\tightlist
\item
  \texttt{detector\_update} which processes the new data point and then
  recovers a set of candidate changepoints
  \(D_n \subseteq \{1,\dots,n-1\}\) that are needed to maximise the GLR
  statistics.
\item
  \texttt{get\_statistics}, that evaluates the GLR statistic across the
  active candidates \(D_n\) for a given cost function from above.
\end{enumerate}

We will describe in the next section how we build and update the set of
candidates \(D_n\) sequentially, and how this is linked to the geometry
of the data.

\subsection{Underlying Geometry of the
Problem}\label{sec_update_pruning}

It is possible to relate the possible candidate change-point locations
to the location of points on the convex hull of the random walk of the
data \citep{romano2023fast,pishchagina2025online}. \pkg{focus} uses this
property to obtain \(D_n\). The vertices of this hull will be the only
points needed to consider for evaluation of
\(\mathcal{Q}_n(\theta_0,\theta_1)\). We now give a novel, generalised
illustration of this link between the geometry of data and the candidate
change locations.

We begin by defining

\[
\hat{\tau}_n = \operatorname*{arg\,max}_{\substack{\tau\in\{1,\dots,n-1\}}} \ \max_{\theta_0,\theta_1} \ q_{\tau,n}(\theta_0,\theta_1),
\] and, for a given value of \(\theta_0, \theta_1\), define the
maximiser of \(\mathcal{Q}_n(\theta_0,\theta_1)\) as:

\[
\hat{\tau}_n(\theta_0, \theta_1) =  \operatorname*{arg\,max}_{\substack{\tau\in\{1,\dots,n-1\}}} \Big\{ q_{\tau,n}(\theta_0,\theta_1) \Big\}.
\] Intuitively, \(\hat{\tau}_n(\theta_0, \theta_1)\) tells us which
candidate changepoint location is most supported by the data, given a
particular pair of pre- and post-change parameters. As we vary
\((\theta_0, \theta_1)\) over all possible values, different candidates
\(\tau\) will emerge as the maximiser, tracing out a partition of the
parameter space. Critically, many indices \(\tau \in \{1, \dots, n-1\}\)
will never be the maximiser for any value of \((\theta_0, \theta_1)\) as
their associated log-likelihood \(q_{\tau,n}(\theta_0,\theta_1)\) is
everywhere dominated by that of some other candidate. This allow to
precisely define the active candidates set \[
D_n = \{\tau \in \{1,\dots,n-1\} : \exists\, (\theta_0, \theta_1) \text{ such that }
\hat{\tau}_n(\theta_0,\theta_1) = \tau \}.
\] This set collects those candidates that contribute to
\(\mathcal{Q}_n(\theta_0,\theta_1)\).

At any future time \(n+1\),
\(\hat{\tau}_{n+1}(\theta_0, \theta_1) \in \{\hat{\tau}_{n}(\theta_0, \theta_1), n\}\),
e.g.~\(\hat{\tau}_{n+1}(\theta_0, \theta_1)\) can only be either the
current maximiser \(\hat{\tau}_n(\theta_0, \theta_1)\) or the new
candidate \(n\), so any \(\tau' \notin D_n\) cannot become a maximiser.
This means the active set can only grow by at most one element at each
step, \[
D_{n} \subseteq D_{n-1} \cup \{n\},
\] and any \(\tau' \in D_{n-1}\) that is not in \(D_{n}\) can be
permanently discarded from the optimisation.

To understand how we can build the set \(D_n\), we now look into the
geometry of our optimisation problem. Subtracting the constant
\(q_n(\theta_1) = \sum_{t=1}^{n} \log p(y_t \mid \theta_1)\) from
\(q_{\tau,n}(\theta_0,\theta_1)\) does not affect the argmax over
\(\tau\), and allows us to write \begin{align*}
\hat{\tau}_n(\theta_0, \theta_1) &= \operatorname*{arg\,max}_{\substack{\tau\in\{1,\dots,n-1\}}} \Big\{ q_{\tau,n}(\theta_0,\theta_1) - q_{n}(\theta_1)\Big\} \\
&= \operatorname*{arg\,max}_{\substack{\tau\in\{1,\dots,n-1\}}} \Big\{ \sum_{t=1}^{\tau} \Big[ \log p(y_t \mid \theta_0) - \log p(y_t \mid \theta_1)  \Big] \Big\} \\
&= \operatorname*{arg\,max}_{\substack{\tau\in\{1,\dots,n-1\}}} \Big\{ q_\tau(\theta_0, \theta_1) \Big\},
\end{align*}

Where
\(q_\tau(\theta_0, \theta_1) = \sum_{t=1}^{\tau} \Big[ \log p(y_t \mid \theta_0) - \log p(y_t \mid \theta_1)  \Big]\).
So \(q_\tau(\theta_0, \theta_1)\) and consequently
\(\hat{\tau}_n(\theta_0, \theta_1)\) do not depend on \(n\) anymore. By
defining: \[
P(\tau) = \left(\tau, \sum_{t=1}^\tau T(y_t)\right) \in \mathbb{R}^{d+1}
\] and \[
\psi(\theta_0, \theta_1) = \left(\frac{A(\theta_1) - A(\theta_0)}{2},\,
\frac{\theta_0 - \theta_1}{2}\right) \in \mathbb{R}^{d+1},
\] we can now write \(q_\tau\) as a product between a data-dependent
vector and a model dependent vector: \[
q_\tau(\theta_0, \theta_1) = \langle P(\tau),\, \psi(\theta_0, \theta_1) \rangle.
\] We note that maximising \(q_\tau\) over \(\tau\) is equivalent to
finding which point \(P(\tau)\) achieves the largest scalar product with
\(\psi(\theta_0, \theta_1)\), which brings us to the following
proposition.

\begin{proposition}
For any $\tau' \in \{1,\dots,n-1\}, \ n> d+2$, if $P(\tau')$ lies in the interior of the convex hull of $\{P(\tau)\}_{\tau \in \{1,\dots,n-1\}}$, then $\tau' \notin D_n$
and can be permanently pruned.
\end{proposition}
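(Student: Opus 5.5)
The argument rests on the representation $q_\tau(\theta_0,\theta_1)=\langle P(\tau),\psi(\theta_0,\theta_1)\rangle$ established just before the statement: for fixed parameters, $\hat\tau_n(\theta_0,\theta_1)$ maximises a linear functional over the finite point set $\{P(\tau)\}_{\tau=1}^{n-1}\subset\mathbb{R}^{d+1}$. The plan is to show that a point in the interior of the convex hull can never attain the maximum of a linear functional. We then check the degenerate case where the functional is zero, and finally invoke the nesting property $D_n\subseteq D_{n-1}\cup\{n\}$ to justify that the pruning is permanent.

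\textbf{Step 1: interior points are strictly dominated for nonzero $\psi$.} Write $H=\mathrm{conv}\{P(\tau)\}_{\tau\le n-1}$ and $\psi=\psi(\theta_0,\theta_1)$, and suppose $\psi\neq 0$. Since $P(\tau')\in\mathrm{int}(H)$, there is $\varepsilon>0$ with $P(\tau')+\varepsilon\psi/\|\psi\|\in H$. We can therefore write this point as $\sum_\tau \lambda_\tau P(\tau)$ with $\lambda_\tau\ge0$ and $\sum_\tau\lambda_\tau=1$. This gives
\[
\max_\tau \langle P(\tau),\psi\rangle\;\ge\;\sum_\tau\lambda_\tau\langle P(\tau),\psi\rangle=\langle P(\tau'),\psi\rangle+\varepsilon\|\psi\|>\langle P(\tau'),\psi\rangle .
\]
Hence $\tau'$ is not a maximiser, and $\hat\tau_n(\theta_0,\theta_1)\neq\tau'$.

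\textbf{Step 2: the degenerate case $\psi=0$.} If $\psi=0$, then $\theta_0=\theta_1$, because the last $d$ coordinates of $\psi$ are $(\theta_0-\theta_1)/2$. In this case every $q_\tau$ equals zero and the argmax is the whole index set. I would handle this by a convention: ties are broken in favour of hull vertices, or equivalently the "no change" parameters $\theta_0=\theta_1$ are excluded from the definition of $D_n$, since they contribute nothing to $\mathcal{Q}_n$. This is the step I expect to be the main obstacle. The statement as written is only literally true once such a tie-breaking convention is fixed, so I would state the convention explicitly.

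\textbf{Step 3: role of the hypothesis $n>d+2$ and permanence.} The condition $n>d+2$ means there are at least $d+2$ points $P(1),\dots,P(n-1)$ in $\mathbb{R}^{d+1}$. This is exactly the number needed for the hull to be full-dimensional, so that its interior is nonempty and the hypothesis can hold nontrivially. Its role is only this; it plays no other part in the argument.

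For permanence, note that the set of points at time $n+1$ is the old set together with $P(n)$. The hull therefore only grows, and an interior point stays interior. Applying Steps 1 and 2 again at every later time gives $\tau'\notin D_m$ for all $m\ge n$. This agrees with the earlier observation that $\hat\tau_{m+1}\in\{\hat\tau_m,m\}$. Hence $\tau'$ can be discarded for good.
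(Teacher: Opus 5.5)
Your proof is correct. It reaches the strict inequality by a different mechanism from the paper's.

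\textbf{The paper's route.} The paper fixes one Carath\'eodory representation $P(\tau')=\sum_{i=1}^{d+2}\lambda_i P(\tau_i)$, chosen independently of $(\theta_0,\theta_1)$; presumably $n>d+2$ is there to guarantee $d+2$ indices. It then asserts $\sum_i\lambda_i\langle P(\tau_i),\psi\rangle<\max_i\langle P(\tau_i),\psi\rangle$ ``because $P(\tau')$ is in the interior''. For a fixed decomposition this step can be an equality even when $\psi\neq 0$. That happens whenever every listed $P(\tau_i)$ lies on the hyperplane through $P(\tau')$ orthogonal to $\psi$. An example is the centre of a square written as the midpoint of a diagonal, with $\psi$ orthogonal to that diagonal. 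Strictness must then come from the final comparison with $\max_\tau q_\tau$, which interiority does guarantee, but the paper does not argue it.

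\textbf{Your route.} You decompose the shifted point $P(\tau')+\varepsilon\psi/\|\psi\|$, so the convex combination is chosen after $\psi$ is fixed. This gives the explicit margin $\varepsilon\|\psi\|$ in one line and needs no Carath\'eodory bound. That is why $n>d+2$ is superfluous in your argument, as you say.

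\textbf{Minor correction to Step 3.} Having at least $d+2$ points is only necessary, not sufficient, for the hull to be full-dimensional. Moreover, one of the $P(\tau)$ being an interior point already forces at least $d+3$ of them.

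\textbf{The case $\psi=0$.} You are right that $\psi=0$, i.e.\ $\theta_0=\theta_1$, makes every $q_\tau$ vanish and every index a maximiser. So the statement genuinely needs a tie-breaking convention or the exclusion of $\theta_0=\theta_1$ from the definition of $D_n$. The paper's strict inequality fails outright there, and the paper leaves this unaddressed.

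\textbf{What each approach buys.} The paper's approach gives a parameter-free certificate: one fixed set of $d+2$ candidates that weakly dominates $\tau'$ for all parameter values. Yours gives a rigorous strict domination for each $\psi\neq 0$. Your monotonicity-of-hulls argument for permanence is also more self-contained than relying on the earlier remark $D_n\subseteq D_{n-1}\cup\{n\}$.
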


\emph{Proof} Since \(P(\tau')\) is in the interior of the convex hull,
there exist \(\tau_1,\dots,\tau_{d+2} \in \{1,\dots,n-1\}\) and weights
\(\lambda_1,\dots,
\lambda_{d+2} \geq 0\) with \(\sum_{i=1}^{d+2}\lambda_i = 1\) such that
\[
P(\tau') = \sum_{i=1}^{d+2} \lambda_i P(\tau_i).
\] For any \((\theta_0, \theta_1)\), taking the scalar product with
\(\psi(\theta_0,\theta_1)\) on both sides and using linearity gives
\begin{align*}
q_{\tau'}(\theta_0,\theta_1) &= \langle P(\tau'),\psi(\theta_0,\theta_1)\rangle = \sum_{i=1}^{d+2} \lambda_i \langle P(\tau_i), \psi(\theta_0,\theta_1)\rangle \\
&< \max_{i \in \{1,\ldots, d+2\}} \langle P(\tau_i), \psi(\theta_0,\theta_1)\rangle \leq \max_{\tau \in \{1,\ldots, n-1\}} q_\tau(\theta_0, \theta_1),
\end{align*}

where the strict inequality holds because \(P(\tau')\) is in the
interior of the hull. Therefore
\(q_{\tau'}(\theta_0,\theta_1) < \max_{\tau \in \{1,\dots,n-1\}} q_\tau(\theta_0,\theta_1)\)
for all \((\theta_0,\theta_1)\), meaning \(\tau'\) can never be the
maximiser \(\hat{\tau}_n(\theta_0,\theta_1)\) and so
\(\tau' \notin D_n\). \hfill \(\Box\)

This proposition shows that \(D_n\) is contained within the set of the
first component (the \(\tau\)s) of the vertices of the convex hull of
\(\{P(\tau)\}_{\tau \in \{1,\dots,n-1\}}\), and for some models, such as
the Gaussian change-in-mean, they are equivalent; see Theorem 2 of
\cite{pishchagina2025online}. It is this geometric structure that we
exploit to maintain \(D_n\) efficiently.

\subsection{Efficient Computations of Convex
Hulls}\label{efficient-computations-of-convex-hulls}

The way we build and update \(D_n\) in practice from \(D_{n-1}\) is by
iteratively integrating a new point
\(P(n) = \left(n, \sum_{t=1}^{n} T(y_t)\right)\) into the hull and
dropping points that lie on the interior of the hull, as they no longer
contribute to \(\hat{\tau}_n(\theta_0, \theta_1)\) (this is the pruning
step). How this is done differs between the univariate and multivariate
cases. In the univariate case, we employ the pruning algorithm described
in \cite{ward2023constant}, whose per iteration cost is amortised
\(\mathcal{O}(1)\) on average, and is closely related to the Melkman
algorithm \citep{melkman1987line}. This is based on comparisons of the
argmax of the component functions \(q_{i,n}\) and \(q_{i+1,n}\) for
pairs of contiguous elements of \(D_n\) and is equivalent to updating
sequentially a 2-dimensional hull by comparing the slope of contiguous
points. As updates of the majorants and minorants of the convex hull can
be kept separate, this also allows for constraining the detection to
only positive or negative changes; see Section \ref{sec_constrained}.

For multivariate data, standard convex hull algorithms can be applied to
prune candidates. As in \cite{pishchagina2025online}, we employ
QuickHull \citep{barber1996quickhull} for reconstructing the set of
optimal points. Because running a full convex hull algorithm such as
QuickHull at every iteration would itself be costly, the implementation
runs it only intermittently. Specifically, the algorithm maintains a
working candidate set \(T'_n\) that is allowed to grow beyond the true
hull vertex set \(D_n\); QuickHull is triggered only once \(|T'_n|\)
exceeds a size threshold, which is itself updated after each pruning
step as \(\lfloor \alpha |T'_{n}| + \beta \rfloor\) for user-controlled
parameters \(\alpha \geq 1\), the pruning multiplier, and
\(\beta \geq 0\), the pruning offset. This means that the algorithm
always retains a small number of extra candidates beyond those strictly
necessary, trading a small increase in the candidate set size for a
significant reduction in the number of times the hull algorithm is
called. In practice, setting \(\alpha = 2\) and \(\beta = 1\) is found
to yield good performance across a range of dimensions and sequence
lengths, but these values can be overridden as described in Section
\ref{sec_detectors}.

One feature of multidimensional sequences, as illustrated in Corollary 1
of \cite{pishchagina2025online}, is that the expected size of \(D_n\)
grows only as \(\mathcal{O}(\log (n)^d )\) in the \(d\)-dimensional
case, rather than linearly with \(n\). So for low-dimensional
multivariate data, such as \(d \leq 5\), one can use the detector
corresponding to the full \((d+1)\)-dimensional convex hull. For
higher-dimensional problems, this becomes computationally impractical
due to the \(\log(n)^d\) growth in the number of hull vertices, and an
approximate algorithm is available: the data are projected onto
overlapping subsets of \(\tilde{d} \leq 5\) coordinates, a convex hull
is computed on each lower-dimensional projection, and the union of the
resulting vertex sets is taken as the set of candidates. This
approximation stores \(\mathcal{O}(d \log(n)^{\tilde{d}})\) candidates
rather than \(\mathcal{O}(\log(n)^d)\), substantially reducing the
computational burden for large \(d\) while retaining good statistical
performance \citep{pishchagina2025online}. How to use this, and the
computational gains obtained from the approximations are briefly
exemplified in Section \ref{sec_multi_detectors}.

\subsection{Non-Parametric and Autoregressive
Detectors}\label{non-parametric-and-autoregressive-detectors}

The package also implements two extensions beyond the i.i.d. exponential
family framework described above. We describe each in turn.

\subsubsection{Non-Parametric Detector}\label{sec_nonparametric_method}

In practice, the underlying distribution of the data or the pre- and
post-change distributions are often unknown, making the parametric
models of Table \ref{tab:expfam} unsuitable for certain scenarios. For
this reason, the package implements the \pkg{npfocus} detector of
\cite{romano2023log}, which detects changes in the empirical cumulative
distribution function (eCDF) of a univariate data stream without any
distributional assumptions. The key idea is to approximate the eCDF on a
fixed grid of \(M\) quantile values \(m_1, \dots, m_M\), computed either
from historical knowledge or over a training period, and to reduce the
problem to \(M\) independent Binomial changepoint detection problems:
for each quantile \(m_j\), the original observations \(y_t\) are
replaced by the binary sequence
\(z^{(j)}_t = \mathbb{I}(y_t \leq m_j)\), and a Binomial \pkg{focus}
detector is run on each such sequence.

Consistent with the general \pkg{focus} architecture, we find a
separation between pruning the changepoint set and the model evaluation:
for each quantile \(m_j\), the \texttt{detector\_update} function
independently maintains the set of candidate changepoints \(D_n^{(j)}\)
based on the binary sequence \(z^{(j)}_{1:n}\), using the same convex
hull geometry principle described in Section \ref{sec_update_pruning};
the \texttt{get\_statistics} function then evaluates the Binomial GLR
statistic at each of these candidates. A global test statistic is then
constructed by aggregating the \(M\) individual GLR statistics either
through their sum or their maximum. A changepoint is declared at time
\(n\) if \[
\sum_{j=1}^M f^{(j)}_{\mathrm{GLR}}(z^{(j)}_{1:n}) \geq \xi^{\mathrm{sum}}
\quad \text{or} \quad
\max_{j \in \{1,\dots,M\}} f^{(j)}_{\mathrm{GLR}}(z^{(j)}_{1:n}) \geq \xi^{\mathrm{max}},
\] for user-specified thresholds
\(\xi^{\mathrm{sum}}, \xi^{\mathrm{max}} \in \mathbb{R}\), where
\(f^{(j)}_{\mathrm{GLR}}(z^{(j)}_{1:n})\) denotes the GLR statistic
applied to the \(j\)-th quantile transformation. The user can choose to
perform either both or one of the two tests, depending on what type of
changes they wish to detect in the distribution: the sum statistic is
more powerful for detecting changes that affect multiple quantiles
(e.g., a shift in the scale or location), while the max statistic is
more powerful for detecting changes that affect only a few quantiles
(such as a change in the tail of the distribution). Section
\ref{sec_nonparametric} provides an example of this.

\subsubsection{Autoregressive Detector}\label{sec_ar_method}

Many real-world time series exhibit temporal dependencies, where
observations at time \(t\) depend on observations at previous times.
Unlike the i.i.d. detectors described above, to deal with temporal
dependence in univariate sequences, the package also implements a
detector for autoregressive data, based on the work of
\cite{fan2026efficientlikelihoodratiotest}.

In this case, we assume the following AR(\(p\)) model: \[
y_t = \mu^{(t)} + \epsilon_t, \quad \text{where} \quad \epsilon_t = \sum_{i=1}^p \rho_i \epsilon_{t-i} + \xi_t,
\] where \(\mu^{(t)} = \mu_0\) for \(t \leq \tau\) and
\(\mu^{(t)} = \mu_1\) for \(t > \tau\), and
\(\xi_t \sim \mathcal{N}(0, \sigma_\xi^2)\). The AR coefficients
\(\rho_1, \ldots, \rho_p\) and the variance \(\sigma_\xi^2\) are assumed
to be known or estimated over a training period; see Section
\ref{sec_ar} for an example of how this can be achieved.

A crude approximation would be to pre-whiten the data using the AR
coefficients and then apply the i.i.d. \pkg{focus} algorithm to the
residuals \(\xi_{1:n}\). However, this approach has a fundamental
limitation: it ignores that immediately after a changepoint at time
\(\tau\), the mean of the pre-whitened data is non-constant over the
first \(p\) observations. Specifically, the first \(p\) observations
after a changepoint exhibit transient behaviour because the AR process's
lagged dependencies still include pre-change values, and the fitted
residuals only become constant once these lags have fully transitioned
to post-change values. Ignoring this transient can lead to a loss of
power for detecting changes that occur near the end of the stream.

The AR(\(p\))-\pkg{focus} algorithm instead implements the GLR for the
AR(\(p\)) model exactly. This is achieved by separately computing the
GLR for changepoint locations \(\tau\) within the last \(p\)
observations and for locations beyond \(p\) observations before \(n\).
The algorithm then applies a slightly modified \pkg{focus} pruning. For
brevity, we do not describe the resulting \(\mathcal{Q}_n(\mu_0,\mu_1)\)
and pruning in detail, but these come from a direct extension of the
original Gaussian algorithm, as described in Section 3 of
\cite{fan2026efficientlikelihoodratiotest}. The
\texttt{detector\_update} and \texttt{get\_statistics} functions
maintain the same separation by managing the candidate set and
evaluating the AR(\(p\)) GLR statistic, respectively. In the case of
known pre-change parameters, notably, unlike the i.i.d. case, the
decoupling of the cost and pruning is not directly possible, and the
pruning step requires knowledge of the pre-change mean \(\mu_0\) in
addition to the post-change \(\mu_1\). For practical details on how this
is handled in the interface and for usage examples, see Section
\ref{sec_ar}.

\section{The Interface}\label{sec_interface}

We now provide a description of the interface of the \pkg{focus} package
in \proglang{R} (and highlight differences for the \proglang{Python} one
where relevant, and additionally in Appendix
\ref{sec_python_interface}). The package presents two different modes of
operation: an online (sequential) mode and an offline (batch) mode. The
online mode is designed for real-time applications where data arrives
sequentially and offers the possibility to update the changepoint
detector with new data points as they arrive, as well as to retrieve the
statistics and the current changepoint estimates at every iteration.
Where the entire data set is available at once, a special function
called \texttt{focus\_offline} allows to processing a time-series as a
whole, and it is intended for scenarios such as training and testing,
benchmarking, or retrospective analysis (see Section
\ref{sec_offline_interface} for more details). As
\texttt{focus\_offline} is performing the same operations and producing
the same results as the online interface, in the following code
snippets, and for most of this section, we focus on the online mode, as
it is more flexible and more commonly used in real-time applications.

\textbf{A Quick Example.} Let us start by generating some Gaussian data
with a change in the mean:

\begin{verbatim}
R> set.seed(42)
R> Y <- c(rnorm(100, mean = 0), rnorm(50, mean = 1))
\end{verbatim}

For running the algorithm online, the first step is to create a detector
object using the \texttt{detector\_create} function. This object will
hold the state of the detector and will be updated with new data points
as they arrive. In this example, we create a univariate detector, but
will provide examples for multivariate detectors in Section
\ref{sec_multi_detectors}. For comparison purposes, in the snippet
below, we store the trace of the statistics in a vector called
\texttt{stat\_trace}. Finally, we run the main loop of the online
interface. In a real-time application, this loop would be running
indefinitely, and new data points would be processed as they arrive.

\begin{verbatim}
R> library(focus)
R> det <- detector_create(type = "univariate")
R> for (i in seq_along(Y)) {
+    detector_update(det, Y[i])
+    result <- get_statistics(det, family = "gaussian")
+    if (result$stat > 20)
+      break
+  }
R> sprintf("Changepoint detected at time %d", i)
\end{verbatim}

\begin{verbatim}
[1] "Changepoint detected at time 141"
\end{verbatim}

Within the \texttt{for} loop, we update the detector with a new data
point with \texttt{detector\_update} and retrieve the current statistics
\texttt{get\_statistics}; both functions will be explained in detail
below. \texttt{get\_statistics} returns a list, where \texttt{\$stat} is
our GLR test statistic: we break the cycle as soon as this is over a
fixed threshold of 20, and print the stopping time. The online interface
supports modern \proglang{R} piping, so the update and retrieval of
statistics can be chained together in a more concise way. For example,
the following snippet is equivalent to the above calls to
\texttt{detector\_update} and \texttt{get\_statistics}:

\begin{verbatim}
R> result <- det |>
+    detector_update(Y[i]) |>
+    get_statistics(family = "gaussian")
\end{verbatim}

This is possible thanks to the fact that both functions return the
detector object itself, allowing for the chaining of operations. This is
a design choice that we made to enhance the usability of the online
interface to align with modern \proglang{R} programming practices. In
practice, \texttt{detector\_update} updates a smart pointer to the
internal state of the detector (see full description below for details),
so the overhead of returning the detector object is negligible.

\textbf{For \proglang{Python} readers.} The pipe operator
\texttt{\textbar{}\textgreater{}} syntax in \proglang{R} will be useful
as well to aid reading for \proglang{Python} users. This is because it
is more syntactically similar to the method chaining in
\proglang{Python}, so most of the code can be directly adapted to
\proglang{Python} by simply replacing the pipe operator with method
calls on the detector object, e.g.:

\begin{verbatim}
>>> result = (det
>>>           .detector_update(Y[i])
>>>           .get_statistics(family="gaussian"))
\end{verbatim}

For the rest of the section, we will begin by introducing the different
types of detectors and cost functions that are implemented in the
package, and then we will describe the offline interface in detail.

\subsection{Changepoint Detectors}\label{sec_detectors}

A detector is the core object of the package, and it holds the current
state of the changepoint detection algorithm. The detector stores and
handles the update of the optimal changepoint locations, while the cost
computations are handled by the cost functions. The cost functions are
designed to be modular and flexible, allowing for a wide range of models
(even for the same detector), as well as user-defined cost functions
(see Section \ref{sec_extracting_info} and \ref{sec_meanvar} for
guidance and an example respectively). Note that not all detectors are
however compatible with all cost functions, but roughly the following
combinations are implemented:

\begin{itemize}
\tightlist
\item
  Detectors for iid data: these work for univariate and multivariate
  data, and they are compatible with many cost functions for exponential
  family distribution models.
\item
  Non-parametric detector: this is designed for univariate data, and
  implements the non-parametric cost function described in
  \citet{romano2023log}. An example of this will be specifically
  described in Section \ref{sec_nonparametric}.
\item
  AR(p) detector: this is designed for univariate data, and it is
  compatible with cost functions for autoregressive models. Again, as
  this detector is tied to a specific cost function, an example of this
  will be described in Section \ref{sec_ar}.
\end{itemize}

\subsubsection{Initialise a New
Detector.}\label{initialise-a-new-detector.}

To initialise a detector, one can simply call \texttt{detector\_create}:

\begin{verbatim}
detector_create(type, dim_indexes = NULL, quantiles = NULL,
  pruning_mult = 2L, pruning_offset = 1L, side = "right",
  anomaly_intensity = NULL, rho = NULL, mu0_arp = NULL
)
\end{verbatim}

\textbf{For \proglang{Python} readers.}, the \proglang{Python} class
leverages a constructor to initialize a detector object. Instead of
calling \texttt{detector\_create(type\ =\ "univariate",\ ...)}, one
creates a \texttt{Detector} instance with
\texttt{Detector(type="univariate",\ ...)}. All arguments are equivalent
to the \proglang{R} function, but the returned object is a
\proglang{Python} class instance rather than an external pointer. The
other major difference is that the default \texttt{NULL} values in
\proglang{R} are replaced by \texttt{None} in \proglang{Python}.

The main arguments to \texttt{detector\_create} are as follows:

\begin{itemize}
\tightlist
\item
  \texttt{type}: Specifies the type of detector to create. Options
  include:

  \begin{itemize}
  \tightlist
  \item
    \texttt{\textquotesingle{}univariate\textquotesingle{}}: Standard
    two-sided (e.g.~for both increases and decreases in the parameter
    signal) univariate detection.
  \item
    \texttt{\textquotesingle{}univariate\_one\_sided\textquotesingle{}}:
    One-sided univariate detection (detects only increases or
    decreases). Which change to detect is determined by the
    \texttt{side} argument (see below).
  \item
    \texttt{\textquotesingle{}multivariate\textquotesingle{}}:
    Multivariate detection using projections (for higher-dimensional
    data).
  \item
    \texttt{\textquotesingle{}npfocus\textquotesingle{}}: Nonparametric
    detection (see Section \ref{sec_nonparametric}).
  \item
    \texttt{\textquotesingle{}arp\textquotesingle{}}: Detection for
    autoregressive processes (see Section \ref{sec_ar}).
  \end{itemize}
\item
  \texttt{dim\_indexes}: For high-dimensional multivariate detection, a
  list of integer vectors specifying which dimensions to use for each
  projection (see example in Section \ref{sec_multi_detectors}). For
  low-dimensional data (5 dimensions or fewer), this is generally not
  required.
\item
  \texttt{quantiles}: Numeric vector of quantiles to track the empirical
  distribution over, required for nonparametric detectors
  (\texttt{type\ =\ \textquotesingle{}npfocus\textquotesingle{}}).
\item
  \texttt{pruning\_mult}, \texttt{pruning\_offset}: Parameters
  controlling the number of iterations before multivariate pruning is
  performed, see Section \ref{sec_update_pruning}
\item
  \texttt{side}: For
  \texttt{\textquotesingle{}univariate\_one\_sided\textquotesingle{}}
  detectors only, determines the direction of detection. Use
  \texttt{\textquotesingle{}right\textquotesingle{}} to detect
  increases, or \texttt{\textquotesingle{}left\textquotesingle{}} to
  detect decreases. Default is
  \texttt{\textquotesingle{}right\textquotesingle{}} (ignored in case of
  any other detector outside
  \texttt{\textquotesingle{}univariate\_one\_sided\textquotesingle{}}).
\item
  \texttt{anomaly\_intensity}: Numeric threshold for pruning candidates
  based on signal strength. If set, only candidates with sufficient
  magnitude are retained. See Section \ref{sec_anomaly_detection} for an
  example. By default, this is disabled.
\item
  \texttt{rho}: Numeric vector of AR coefficients, required for
  autoregressive detectors
  (\texttt{type\ =\ \textquotesingle{}arp\textquotesingle{}}).
\item
  \texttt{mu0\_arp}: Optional pre-change mean for AR(\(p\)) detectors.
  Whilst normally the pruning should be independent of the pre-change
  parameter (and for all models is passed with \texttt{get\_statistics},
  see Section \ref{sec_statistics}), this is not the case for the
  AR(\(p\)) model. Therefore, this to be specified here if known. An
  example is in Section \ref{sec_ar}.
\end{itemize}

The detector object returned by \texttt{detector\_create} is passed to
subsequent functions to update its state and retrieve statistics.

\subsubsection{Update the Detector}\label{update-the-detector}

To update a detector at every iteration, one should call
\texttt{detector\_update} with the new data point, and the detector
object from the previous iteration.

\begin{verbatim}
detector_update(det_ptr, y, lambda=1)
\end{verbatim}

\textbf{For \proglang{Python} readers.}, the update is performed by
calling the \texttt{update} method of the detector object, so no
detector argument is needed (e.g.~\texttt{detector.update(y)}). Note
that the \texttt{lambda} argument is called \texttt{lambda\_}.

Its arguments are:

\begin{itemize}
\tightlist
\item
  \texttt{det\_ptr}: The detector object (an external smart pointer)
  created by \texttt{detector\_create}.
\item
  \texttt{y}: The new observation(s) to add. For univariate detectors,
  this is a single numeric value. For multivariate detectors, this
  should be a numeric vector matching the number of dimensions.
\item
  \texttt{lambda}: A weight for the current observation, defaulting to
  \texttt{1}. Rather than incrementing the internal time counter by 1 at
  each step, the counter is incremented by \texttt{lambda}, so that
  \(P(\tau)\) for all \(\tau \in D_n\) (Section
  \ref{sec_update_pruning}) becomes \(\sum_{i=1}^{\tau} \lambda_i\).
  This enables non-stationary models where the parameter of interest
  scales with a known covariate: for instance, a Gaussian observation
  \(y_i \sim \mathcal{N}(\lambda_i \theta, \nu_i)\) or a Poisson
  observation \(y_i \sim \text{Poisson}(\lambda_i \theta)\) with a
  time-varying null (background) rate \(\lambda_i\). This may be useful
  when implementing custom cost functions, see Section
  \ref{sec_meanvar}.
\end{itemize}

The detector is updated in place, so the same object is returned
(allowing for chaining with the pipe operator if desired).

For example, here a detector is initialised and updated with a couple of
observations:

\begin{verbatim}
R> det_uni <- detector_create(type = "univariate")
R> det_uni <- det_uni |> detector_update(0.5) 
R> det_uni <- det_uni |> detector_update(0.1)
R> print(det_uni)
\end{verbatim}

\begin{verbatim}
<pointer: 0x5667f50b71b0>
\end{verbatim}

The output \texttt{\textless{}pointer:\ 0x...\textgreater{}} indicates
that the detector object is an external pointer to the underlying C++
implementation (printing the hexadecimal address uniquely identifies the
detector instance being referenced). It is important to note that
detector objects maintain mutable state, meaning that all variables
pointing to the same detector instance share that state. For example, if
you assign the result of \texttt{detector\_update} to different
variables (e.g.,
\texttt{det\_uni1\ \textless{}-\ detector\_update(det\_uni,\ 0.5)} and
\texttt{det\_uni2\ \textless{}-\ detector\_update(det\_uni1,\ 0.1)}),
both \texttt{det\_uni1} and \texttt{det\_uni2} reference the same
underlying C++ object. In \proglang{R}, the return is not strictly
necessary to update the detector, as the update is performed in place,
so one could simply call \texttt{detector\_update(det\_uni,\ 0.5)}
without assignment, and the state of \texttt{det\_uni} would still be
updated. However, we return the detector object for consistency with the
pipe operator and to allow for chaining of operations.

\subsection{Compute Statistics}\label{sec_statistics}

As explained in Section \ref{sec_algorithm}, the
\texttt{get\_statistics} function computes the log-likelihood ratio test
statistic from the set of optimal changepoints at the current time.
Therefore, in a normal workflow, at every iteration, one should call
\texttt{get\_statistics} with the updated detector object and the cost
function parameters, and compare the resulting statistic against a
threshold to decide whether to declare a changepoint.

\begin{verbatim}
get_statistics(det_ptr, family, theta0 = NULL, shape = NULL)
\end{verbatim}

\textbf{For \proglang{Python} readers.}, the statistics are retrieved by
calling the \texttt{get\_statistics} method of the detector object, so
no detector argument is needed
(e.g.~\texttt{detector.get\_statistics(family,\ theta0=None,\ shape=None)}),
with the arguments described below.

The main arguments are:

\begin{itemize}
\tightlist
\item
  \texttt{det\_ptr}: The detector object created by
  \texttt{detector\_create}.
\item
  \texttt{family}: Character string specifying the distribution family.
  Options include:

  \begin{itemize}
  \tightlist
  \item
    \texttt{\textquotesingle{}gaussian\textquotesingle{}}: Gaussian
    (normal) distribution for continuous data.
  \item
    \texttt{\textquotesingle{}poisson\textquotesingle{}}: Poisson
    distribution for count data.
  \item
    \texttt{\textquotesingle{}bernoulli\textquotesingle{}}: Bernoulli
    distribution for binary data.
  \item
    \texttt{\textquotesingle{}gamma\textquotesingle{}}: Gamma
    distribution (requires \texttt{shape} parameter).
  \item
    \texttt{\textquotesingle{}npfocus\textquotesingle{}}: Nonparametric
    detection (see Section \ref{sec_nonparametric}).
  \item
    \texttt{\textquotesingle{}arp\textquotesingle{}}: AutoRegressive
    Process detection (see Section \ref{sec_ar}).
  \end{itemize}
\item
  \texttt{theta0}: Numeric vector specifying the null hypothesis
  parameter if known (the \(\theta_0\) pre-change parameter). For
  univariate detectors, this is a scalar; for multivariate detectors, a
  vector matching the number of dimensions. Default is \texttt{NULL}, in
  which case the null parameter is estimated.
\item
  \texttt{shape}: Numeric scalar for the shape parameter required by the
  gamma distribution. This is only used when
  \texttt{family\ =\ \textquotesingle{}gamma\textquotesingle{}}.
\end{itemize}

The function returns a list (or a dictionary, in \proglang{Python}) with
the following components:

\begin{itemize}
\tightlist
\item
  \texttt{stopping\_time}: Integer indicating the current time index
  (number of observations processed).
\item
  \texttt{changepoint}: The current best estimate of the changepoint
  location, if there is a change. This is the \(\hat\tau_n\) defined in
  Section \ref{sec_update_pruning}.
\item
  \texttt{stat}: Numeric scalar or vector. The test statistic(s). This
  is either a scalar value (e.g.~in univariate or multivariate), or a
  vector (for instance, in case of an \texttt{npfocus} detector).
\end{itemize}

An important aspect of the implementation is that one can test multiple
cost functions on the same detector object, as the statistics
computation is independent of the detector state. This allows for model
comparison without needing to rerun the detector.

For example, on the detector initialised and updated in the previous
section, below we compute the Gaussian and Gamma statistics:

\begin{verbatim}
R> get_statistics(det_uni, family = "gaussian")$stat
\end{verbatim}

\begin{verbatim}
[1] 0.08
\end{verbatim}

\begin{verbatim}
R> get_statistics(det_uni, family = "gamma", shape = 2)$stat
\end{verbatim}

\begin{verbatim}
[1] 1.175573
\end{verbatim}

\subsection{Extracting Informations from the Detector
Object}\label{sec_extracting_info}

Beyond computing the test statistics, the package provides several
functions to inspect and extract information from the detector object.
These functions allow users to access the internal state of the
detector, and it can be particularly useful to implement new cost
functions without reimplementing the core pruning algorithm. For the
rest of this subsection, we will describe the main functions for
extracting information from the detector object, but for a practical
implementation of a new cost, one should refer to Section
\ref{sec_meanvar}, where we provide an example of implementing a custom
cost function for Gaussian change in mean and variance detection. To
understand the returned values of these functions, it is useful to
recall the notation of Section \ref{sec_update_pruning}.

\subsubsection{Accessing Changepoint
Candidates}\label{sec_accessing_candidates}

The detector maintains a set of candidate changepoint and summary
statistics. Two functions are provided to access this information:

\begin{verbatim}
detector_candidates(det_ptr)
detector_cands_len(det_ptr)
\end{verbatim}

\texttt{detector\_candidates} returns a detailed data frame with
information about all candidate segments. To draw a parallel with
Section \ref{sec_update_pruning}, at time \(n\) this function will
expose the internal \(P(\tau)\) for all \(\tau \in D_n\). It will return
a \texttt{DataFrame} in \proglang{R} (or a list of lists in
\proglang{Python}). Each row represents a candidate segment from a
changepoint location to the current time. The data frame contains the
following columns:

\begin{itemize}
\tightlist
\item
  \texttt{tau}: Integer vector of candidate changepoint locations,
  e.g.~the first component of \(P(\tau)\).
\item
  \texttt{st}: List of numeric vectors containing the sufficient
  statistics for each candidate segment. These statistics (e.g.,
  cumulative sums, cumulative sum of squares), in conjunction with the
  functions below, are used to efficiently compute test statistics
  without reprocessing the raw data, corresponding to the remaining
  \(d+1\) components of \(P(\tau)\).
\item
  \texttt{side}: Character vector indicating the side for each
  candidate. This is which side of the univariate hull are the
  candidates related to, either the convex majorant or minorant. It
  could be relevant for implementation of one-sided cost functions, but
  it can be ignored for multivatiate detectors.
\end{itemize}

\textbf{For \proglang{Python} readers.}, the candidate information
functions are available as methods: \texttt{detector.get\_candidates()}
returns the candidate segments (as a dictionary) and
\texttt{detector.get\_n\_candidates()} returns the number of candidate
segments.

Here is an example:

\begin{verbatim}
R> candidates <- detector_candidates(det)
R> n_cand <- detector_cands_len(det)
R> print(n_cand)
\end{verbatim}

\begin{verbatim}
[1] 12
\end{verbatim}

\begin{verbatim}
R> print(head(candidates))
\end{verbatim}

\begin{verbatim}
  tau        st  side
1   0         0 right
2  47 -3.451892 right
3  61 -1.966885 right
4  99  2.598277 right
5 100  3.251482 right
6 129  24.42906 right
\end{verbatim}

\texttt{detector\_cands\_len} returns an integer indicating the number
of candidate segments currently tracked by the detector. As the detector
processes new observations, this number may grow or shrink depending on
the pruning strategy (e.g.~in the multivariate case, the number of
candidates at a given iteration is controlled by the
\texttt{pruning\_mult} and \texttt{pruning\_offset} parameters set at
detector creation, which could be also used to compute worst-case run
time scenarios in case of time-sensitive applications).

\subsubsection{Number of Observations and Cumulative
Statistics}\label{number-of-observations-and-cumulative-statistics}

The following two functions provide access to the sufficient statistics
for the full data sequence. These are needed to efficiently compute
(\(\mathcal{O}(1)\) complexity) the log-likelihood values
\(q_{\tau,n}(\theta_0,\theta_1)\) of user defined functions given
candidate changepoints \(\tau \in D_n\) without reprocessing the raw
data via telescopic sum cancellations
(i.e.~\(\sum_{t=1}^n T(y_t) - \sum_{t=1}^\tau T(y_t) = \sum_{t=\tau+1}^n T(y_t)\)).

\begin{verbatim}
detector_info_n(det_ptr)
detector_info_sn(det_ptr)
\end{verbatim}

Specifically, \texttt{detector\_info\_n} returns the current number of
observations processed, which corresponds to the time index \(n\). This
indicates how many data points have been processed since the detector
was created. \texttt{detector\_info\_sn} returns the full cumulative sum
statistics maintained internally by the detector,
e.g.~\(\sum_{t=1}^n T(y_t)\). For univariate detectors, this returns a
scalar; for multivariate detectors, this returns a vector of length
equal to the number of dimensions.

\textbf{For \proglang{Python} readers.}, these functions are available
as methods of the detector object, so they can be called as
\texttt{detector.get\_n()} and \texttt{detector.get\_sn()},
respectively.

For example:

\begin{verbatim}
R> n <- detector_info_n(det)
R> cumsum_stat <- detector_info_sn(det)
R> sprintf("n")
\end{verbatim}

\begin{verbatim}
[1] "n"
\end{verbatim}

\begin{verbatim}
R> sprintf("%.3f", cumsum_stat)
\end{verbatim}

\begin{verbatim}
[1] "38.625"
\end{verbatim}

\subsection{The Offline Interface}\label{sec_offline_interface}

\texttt{focus\_offline} runs the complete changepoint detection
algorithm over a whole data set in a single call. The primary use of
\texttt{focus\_offline} is for computation of the threshold, as seen
later in the examples of Sections \ref{sec_use_cases}, and for
retrospective analysis. Compared to the online interface, the offline
call avoids repeated \proglang{R}-to-C++ crossings and is therefore
(slightly) more efficient.

\begin{verbatim}
focus_offline(Y, threshold, type = "univariate", family = "gaussian",
  theta0 = NULL, dim_indexes = NULL, quantiles = NULL,
  pruning_mult = 2L, pruning_offset = 1L, side = "right",
  shape = NULL, anomaly_intensity = NULL, rho = NULL, mu0_arp = NULL
)
\end{verbatim}

\textbf{For \proglang{Python} readers.}, \texttt{focus\_offline} is
available as a stand-alone function with the same signature and
behaviour as in \proglang{R}.

The \texttt{focus\_offline} function takes as input the observed data
\texttt{Y} (a numeric vector for univariate problems or an
\(\mathbb{R}^{n \times d}\) matrix for multivariate problems), a
detection threshold, and a set of detector configuration and cost
function parameters. Since most arguments of \texttt{focus\_offline}
mirror those of the online interface (combining the arguments of
\texttt{detector\_create} and \texttt{get\_statistics}), we only
describe the \texttt{threshold} argument. For other detector type and
cost function parameters, refer to Sections \ref{sec_detectors} and
\ref{sec_statistics}, respectively.

\texttt{threshold} specifies detection threshold(s) as either a scalar
(applied to all statistics) or a numeric vector (one threshold per
statistic). Setting \texttt{threshold\ =\ Inf} disables early stopping
and returns the full statistic traces (useful for thresholding) and
changepoint estimates at every time step. With finite threshold(s), the
algorithm terminates at the first time any statistic exceeds its
corresponding threshold, mirroring the online loop's stopping behaviour.
For multivariate detectors, the implementation computes multiple
statistics (one per projection or per configured statistic) and triggers
a detection when any of these crosses its threshold.

For example, the equivalent of the Gaussian change-in-mean online loop
described in the previous section can be achieved in the offline mode by
calling \texttt{focus\_offline} with the same cost function parameters:

\begin{verbatim}
R> result_offline <- focus_offline(Y, threshold = Inf,
+                                  type = "univariate",
+                                  family = "gaussian")
\end{verbatim}

\begin{figure}

\centering{

\includegraphics{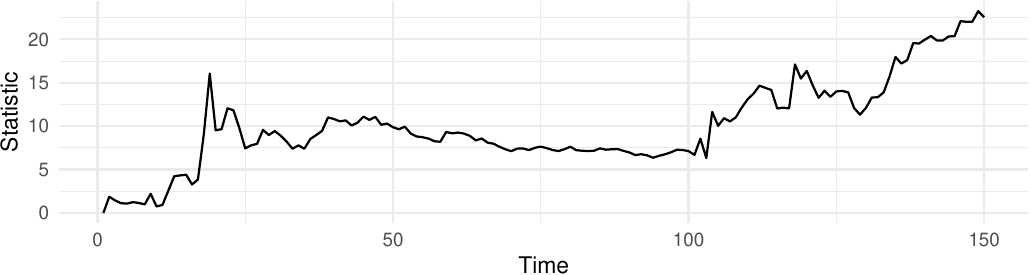}

}

\caption{\label{fig-r-offline}The trace of the statistics for one
simulated time series using the R \texttt{focus\_offline} interface.}

\end{figure}%

\section{Common Use Cases of Focus}\label{sec_use_cases}

Below are a few common examples showing how to create and update
different types of detectors. Again, for brevity, we focus on the
\proglang{R} interface, but the same examples can be easily adapted to
the \proglang{Python} interface as described above.

\subsection{Constrained Detection with One-sided
Detectors}\label{sec_constrained}

There may be some applications where one is only interested in detecting
changes in a specific direction, for example, only increases in the mean
of a signal. To create a one-sided detector, one can simply set the
\texttt{side} argument of \texttt{detector\_create} to either
\texttt{\textquotesingle{}right\textquotesingle{}} (to detect increases)
or \texttt{\textquotesingle{}left\textquotesingle{}} (to detect
decreases). We can show how to implement a detector with an example.

To choose an appropriate detection threshold, a common approach is to
use Monte Carlo simulation
\citep[as in][amongst many]{chen2022high, pishchagina2025online}. There
are various statistical criteria that can be used to specify a
threshold. Below we show how \texttt{focus\_offline} can be used to find
a threshold that gives a required false positive probability over a
fixed length of data. This example shows how using the
\texttt{threshold=Inf} setting enables us to do this efficiently without
re-running the algorithm for different thresholds.

Obtaining a threshold that gives a require false-positive rate for data
of length \(n\) is achieved by (i) generating many sequences of data
\(Y^{(k)}_{1:n}\) (\(k = 1, \ldots, K\)) under the null hypothesis of no
changepoints, (ii) computing the detection statistic
\(f_{\mathrm{GLR}}(Y^{(k)}_{1:n})\) for each sequence, and taking the
maximum over time \(\max_n f_{\mathrm{GLR}}(Y^{(k)}_{1:n})\) (as this is
the most extreme value that would trigger a detection), and finally
(iii) defining the threshold as the empirical quantile
\(\xi = F^{-1}_K(1 - \alpha)\), where \(F_K\) is the empirical
distribution of the maxima and \(\alpha\) is the desired false positive
rate. Specifically, to achieve a \(1\%\) false positive rate
(\(\alpha = 0.01\)), we take the 99th percentile of the null statistics
as our threshold; approximately 1\% of null sequences will then exceed
this threshold by chance.

To illustrate, here we obtain such Monte Carlo thresholds for a
one-sided detector on sequences of length \(10^5\):

\begin{verbatim}
library(furrr)
plan(multisession, workers = 4)
set.seed(42)
n_sim <- 500
runs <- future_map_dbl(
  1:n_sim, \(i) {
    Y_sim <- rnorm(1e5)
    res <- focus_offline(Y_sim, threshold = Inf,
      type = "univariate_one_sided", side = "right",
      family = "gaussian"
    )
    max(res$stat)
  },
  .options = furrr_options(seed = TRUE)
)
plan(sequential)
print(threshold_99 <- quantile(runs, 0.99))
\end{verbatim}

\begin{verbatim}
     99% 
31.23122 
\end{verbatim}

The code simulates 500 null sequences of independent Gaussian data,
computes the maximum detection statistic for each using
\texttt{focus\_offline} with \texttt{threshold\ =\ Inf} (returning all
statistics without early stopping), and then takes the 99th percentile
of these maxima as the threshold. We parallelise the simulations via the
\texttt{furrr} package to speed up computation. A similar
parallelisation strategy can be employed in the \proglang{Python}
interface using the \texttt{concurrent.futures} module.

We then generate some data with two changes, a decrease followed by an
increase, and we run the one-sided detector on this data with the
threshold obtained above:

\begin{verbatim}
R> set.seed(123)
R> Y <- c(rnorm(1e4, mean = 0), rnorm(1e4, mean = -1), rnorm(1e4, mean = 1))
R> det_one_sided <- detector_create(type = "univariate_one_sided",
+                                   side = "right")
R> for (i in seq_along(Y)) {
+    det_one_sided <- det_one_sided |> detector_update(Y[i])
+    result <- det_one_sided |> get_statistics(family = "gaussian")
+    if (result$stat > threshold_99)
+      break
+  }
R> sprintf("Changepoint detected at time %d", i)
\end{verbatim}

\begin{verbatim}
[1] "Changepoint detected at time 20016"
\end{verbatim}

We notice how we ignored the first change at time 10000, which
corresponds to a decrease in the mean, whilst we detected the second
change at time 20000.

\subsection{Multivariate Detectors}\label{sec_multi_detectors}

For multivariate detectors of low-dimensional data (fewer than 5
dimensions), the method prunes via computing the vertices of a convex
hull obtained from all dimensions:

\begin{verbatim}
R> set.seed(42)
R> n <- 1000
R> D <- 3
R> Y <- matrix(rnorm(n * D), nrow = n, ncol = D)
R> Y[501:1000, ] <- Y[501:1000, ] + 1
R> 
R> det_mv <- detector_create(type = "multivariate")
R> stat_trace <- vector("numeric", length = n)
R> for (i in seq_len(nrow(Y))) {
+    det_mv <- det_mv |> detector_update(Y[i, ])
+    result <- det_mv |> get_statistics(family = "gaussian")
+    stat_trace[i] <- result$stat
+    
+  }
\end{verbatim}

\begin{figure}

\centering{

\includegraphics{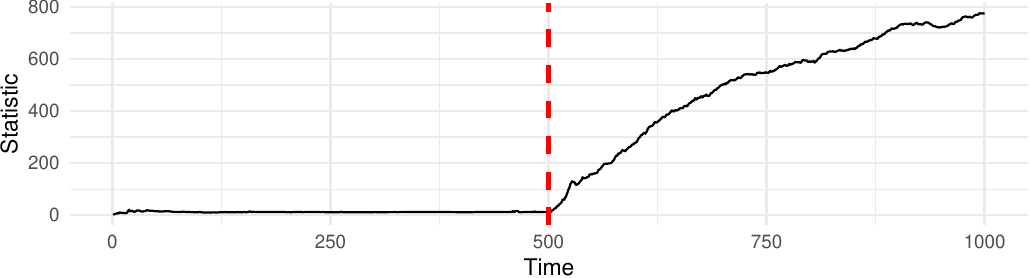}

}

\caption{\label{fig-multivariate}The trace of the statistics for one
simulated 3-dimensional multivariate time series using the R online
interface. Dashed red line illustrates the true change location.}

\end{figure}%

However, for higher-dimensional data, this can be computationally
expensive and may not be necessary for effective detection. As mentioned
in Section \ref{sec_update_pruning}, the package provides an
approximation of the hull by projecting the data onto overlapping
subsets of \(\tilde{d} \le 5\) coordinates, computing the convex hull on
each projection, and taking the union of the vertices. This can be
accessed via the \texttt{dim\_indexes} argument of
\texttt{detector\_create}, which specifies which set of coordinates to
project onto. To help generate these projection indexes, the function
\texttt{generate\_projection\_indexes} is provided:

\begin{verbatim}
dim_idx <- generate_projection_indexes(6, 2)
head(dim_idx, 2)
\end{verbatim}

\begin{verbatim}
[[1]]
[1] 0 1

[[2]]
[1] 1 2
\end{verbatim}

\begin{verbatim}
det_mv <- detector_create(type = "multivariate", dim_indexes = dim_idx)
\end{verbatim}

This particular example generates circular sequential projections of 2
dimensions from 6-dimensional data, producing 6 non-overlapping windows.
Each row of the output represents one projection and contains the
(0-indexed, since C++ uses 0-based indexing) coordinate indices to use.
In this instance, the first two projections use coordinates (0, 1) and
(1, 2), and the last wraps around to (5, 0).

As an illustration, we can measure how big the approximation is of using
the projections by calculating the statistics for a 6-dimensional
dataset with both the full hull and the projection-based approximation,
and compare both the run time and the resulting statistics:

\begin{verbatim}
R> system.time(
+    res_multi <- focus_offline(Y_multi, threshold = Inf,
+                               type = "multivariate", family = "gaussian")
+  )
\end{verbatim}

\begin{verbatim}
   user  system elapsed 
 10.284   0.123  10.409 
\end{verbatim}

\begin{verbatim}
R> system.time(
+    res_multi_approx <- focus_offline(Y_multi, threshold = Inf,
+                                      type = "multivariate", family = "gaussian",
+                                      dim_indexes = dim_idx)
+  )
\end{verbatim}

\begin{verbatim}
   user  system elapsed 
  0.166   0.000   0.166 
\end{verbatim}

\begin{verbatim}
R> all.equal(res_multi$stat, res_multi_approx$stat)
\end{verbatim}

\begin{verbatim}
[1] "Mean relative difference: 0.003782673"
\end{verbatim}

We observe that the projection-based approximation is significantly
faster than the full convex hull pruning, while still producing
extremely similar results.

\subsection{Anomaly Detection}\label{sec_anomaly_detection}

The i.i.d. detector allows for detecting anomalies (sometimes known as
epidemic changepoints) and more transient changes, while ignoring
longer, less intense variations. This is particularly useful when the
background rate is not constant but evolves slowly over time, as such
drift can otherwise lead to spurious detections when using a fixed
reference value. To address this, when \texttt{anomaly\_intensity} is
set to a positive value, change candidates are retained only if they
show sufficient ``signal intensity'' (i.e., the magnitude of the change
relative to the segment length is at least \texttt{anomaly\_intensity}).

To illustrate this, we generate data with a slowly varying background
\(\mu_t = 0.8 \sin(2 \pi t / 400)\), to which we add some Gaussian noise
and two transient anomalies (at times 100 and 800), as seen in Figure
\ref{fig-anomaly_detection}. We then run the detector with and without a
minimum change size. When computing the statistics, we set
\texttt{theta0\ =\ 0}, deliberately ignoring the slowly varying
background:

\begin{verbatim}
R> run_detector <- function(Y, anomaly_intensity = NULL) {
+    det <- detector_create(type = "univariate",
+                           anomaly_intensity = anomaly_intensity)
+    stat_trace <- numeric(length(Y))
+    for (i in seq_along(Y)) {
+      det <- det |> detector_update(Y[i])
+      res <- det |> get_statistics(family = "gaussian", theta0 = 0)
+      stat_trace[i] <- res$stat
+    }
+    stat_trace
+  }
R> stat_no_thresh <- run_detector(Y_anom, anomaly_intensity = NULL)
R> stat_thresh <- run_detector(Y_anom, anomaly_intensity = 1.5)
\end{verbatim}

In this example, the background mean varies smoothly with an amplitude
smaller than the imposed minimum change size. In Figure
\ref{fig-anomaly_detection}, below the data, we see the two
corresponding statistic traces \texttt{stat\_no\_thresh} and
\texttt{stat\_thresh}. We notice how, without the constraint, the
detector reacts to these slow variations. By setting
\texttt{anomaly\_intensity} larger than the maximum background
deviation, these are suppressed, while the transient anomalies are still
clearly identified.

\begin{figure}

\centering{

\includegraphics{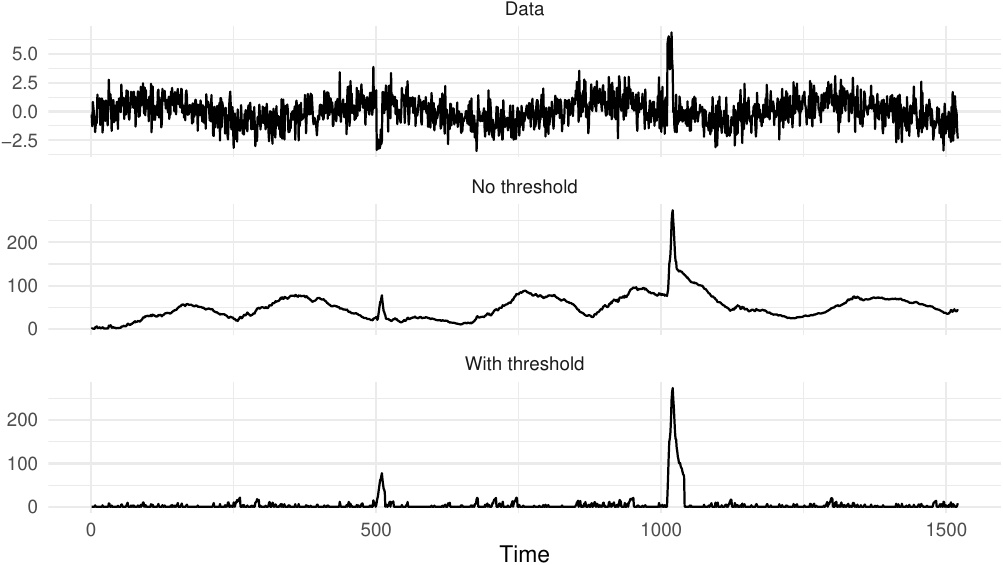}

}

\caption{\label{fig-anomaly_detection}Comparison of statistics with and
without minimum change size under a slowly varying background: Analysed
data (top); trace of test statistic without (middle) and with (bottom) a
minimum change size.}

\end{figure}%

\subsection{Non-Parametric Changepoint
Detection}\label{sec_nonparametric}

As a nonparametric detector, the package implements the \pkg{npfocus}
detector described in Section \ref{sec_nonparametric_method}. As
\pkg{npfocus} detects changes in the empirical cumulative density
function estimated across a set of quantiles, to create a non-parametric
detector, one should provide a vector of quantiles to the argument
\texttt{quantiles} of \texttt{detector\_create}. Quantiles normally can
be estimated from historical data, or from domain knowledge (for the
sake of the example below we will use quantiles from known
distributions). We can then create a non-parametric detector using these
quantiles. \pkg{npfocus} returns two statistics: one based on summing
over quantiles, and one based on taking the maximum over quantiles.

As an illustration, to demonstrate a comparison of the behaviour of the
two statistics, we consider two scenarios: (1) a change in location
using heavy-tailed \(t_2\) data (stored below in \texttt{Y1}), and (2) a
Gaussian with a change only in the tail (stored in \texttt{Y2}), where
values with absolute values greater than 2 are shifted up by 20. We then
run the non-parametric detector online on both data sets, and store the
trace of both statistics for comparison. In Figure \ref{fig-npfocus} we
plot the resulting statistic traces in both scenarios together with the
data.

\begin{verbatim}
R> q1 <- qt(seq(0.01, 0.99, length.out = 5), df = 2)
R> det_np1 <- detector_create(type = "npfocus", quantiles = q1)
R> stat_trace1 <- matrix(nrow = length(Y1), ncol = 2)
R> for (i in seq_along(Y1)) {
+    det_np1 <- det_np1 |> detector_update(Y1[i])
+    res <- det_np1 |> get_statistics(family = "npfocus")
+    stat_trace1[i, ] <- res$stat
+  }
R> q2 <- qnorm(seq(0.01, .99, length.out = 4))
R> det_np2 <- detector_create(type = "npfocus", quantiles = q2)
R> stat_trace2 <- matrix(nrow = length(Y2), ncol = 2)
R> for (i in seq_along(Y2)) {
+    det_np2 <- det_np2 |> detector_update(Y2[i])
+    res <- det_np2 |> get_statistics(family = "npfocus")
+    stat_trace2[i, ] <- res$stat
+  }
\end{verbatim}

\begin{figure}

\centering{

\includegraphics{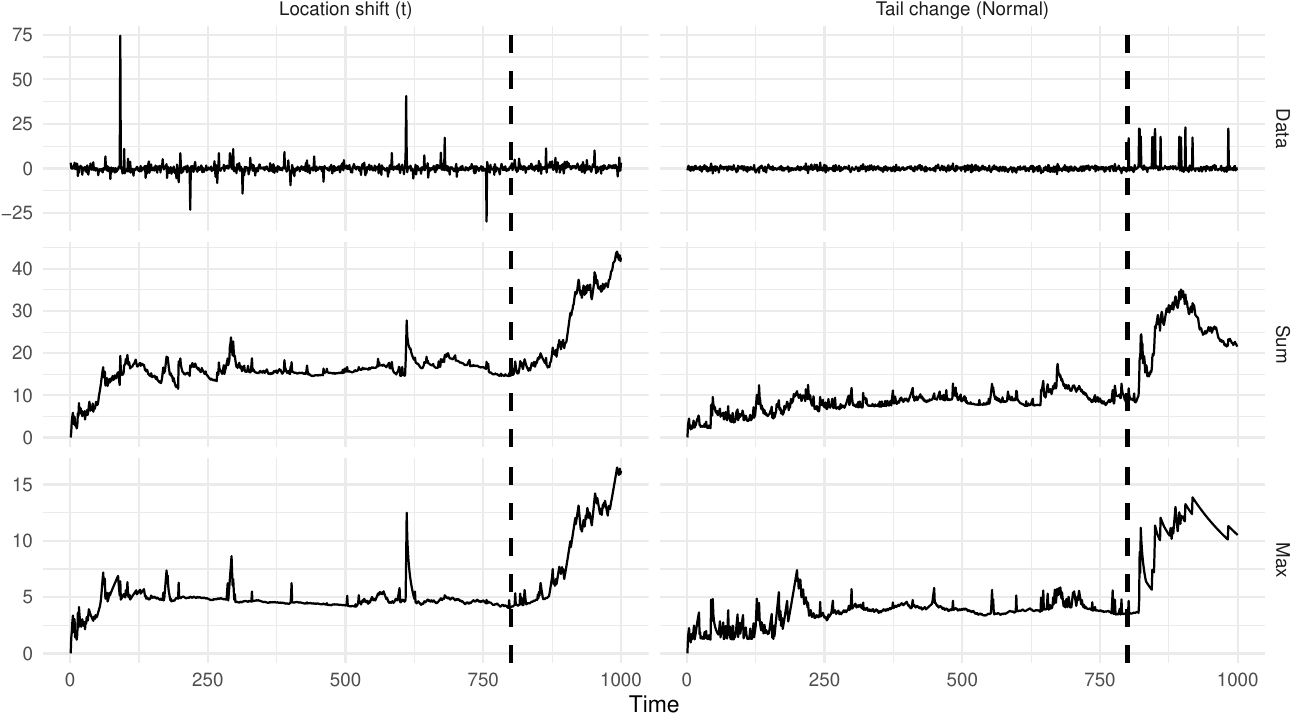}

}

\caption{\label{fig-npfocus}Comparison of sum and max statistics for two
different change scenarios. Left: change in location for \(t_2\) data.
Right: change only in the tail for normal data. Dashed line indicates
the true change location.}

\end{figure}%

\subsection{Autoregressive Changepoint Detection}\label{sec_ar}

To create an AR(p) detector, one must specify the AR coefficients
\(\rho = (\rho_1, \ldots, \rho_p)\) of the underlying process. These can
be estimated from historical data. Additionally, if the pre-change mean
is known, it can be provided to the detector for more efficient pruning,
though this must be specified at detector creation time via the
\texttt{mu0\_arp} argument.

As an example, we first generate data from an AR(2) process with a
change in mean. We start by specifying the AR coefficients and
generating some data:

\begin{verbatim}
R> set.seed(123)
R> ar_coefs <- c(0.7, -0.3)
R> Y_pre <- arima.sim(n = 5000, model = list(ar = ar_coefs), sd = 1)
R> Y_post <- 2 + arima.sim(n = 100, model = list(ar = ar_coefs), sd = 1)
R> Y <- c(Y_pre, Y_post)
\end{verbatim}

We then run the AR(\(p\)) detector online on this data. For this
illustration we estimate the AR coefficients from a probation period of
1000 observations, as in real-world applications these can be estimated
from historical data. We plot the resulting statistic trace together
with the data in Figure \ref{fig-arp_detection}.

\begin{verbatim}
R> ar_coefs <- ar.yw(Y_pre[1:1000], order.max = 5)$ar
R> det_arp <- detector_create(type = "arp", rho = ar_coefs, mu0_arp = 0)
R> stat_trace <- numeric(length(Y))
R> 
R> for (i in 1001:length(Y)) {
+    det_arp <- det_arp |> detector_update(Y[i])
+    result <- det_arp |> get_statistics(family = "arp")
+    stat_trace[i] <- result$stat
+  }
R> tail(result$stat)
\end{verbatim}

\begin{verbatim}
[1] 183.7202
\end{verbatim}

\begin{figure}

\centering{

\includegraphics{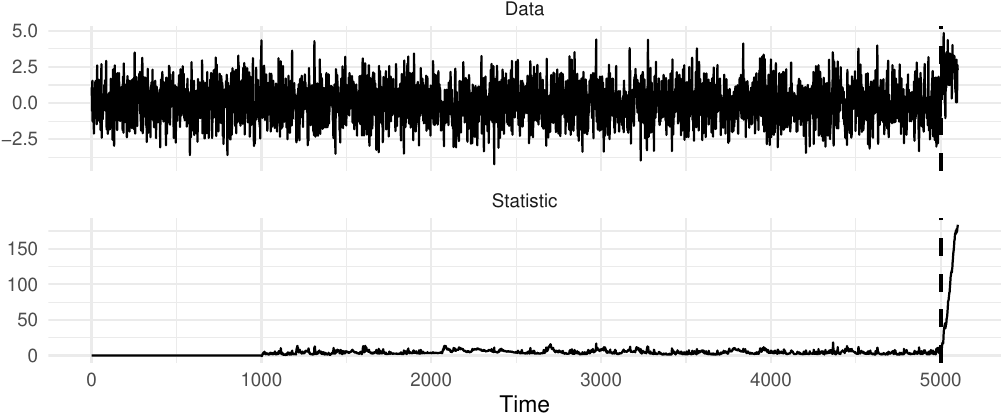}

}

\caption{\label{fig-arp_detection}The AR(2) process with mean shift and
the resulting detection statistic using the R online interface. Dashed
line illustrates the true change location.}

\end{figure}%

\section{Some Examples of Real-Time
Applications}\label{sec_real_time_applications}

In this section, we illustrate few real-world applications of the
package. The NBA application (Section \ref{sec_meanvar}) is implemented
in \proglang{R}, while the gamma-ray burst detection (Section
\ref{sec_grb}) and spike inference (Section \ref{sec_calcium}) examples
are implemented in \proglang{Python}.

\subsection{Implementing a Custom Cost Function: Change in Mean and
Variance on NBA Plusminus Scores}\label{sec_meanvar}

In this section, we illustrate how to implement a custom cost function
for a bivariate model to detect a change in the Cleveland Cavaliers (an
NBA team) Plus-Minus score from the season 2010-11 to the 2017-18 season
as in \citep{pishchagina2025online}.

As illustrated in \citep{pishchagina2025online}, simulating data to
calibrate a simple Gaussian change in mean model for this NBA dataset is
tricky. This is because a small change in the variance of the reference
distribution changes the threshold and, as a consequence, can increase
the detection time or leads to false detections. A robust alternative to
this problem is to consider a Gaussian change in mean and variance
model. Essentially, we let the data decide what the variance is before
and after the change. Because we look at the score difference, we might
very well have successive matches with exactly the same score
difference. In that case, the maximum likelihood estimator of the
variance is 0 and the likelihood would be infinite. To overcome this, we
consider, as in \citep{pishchagina2025online}, a lower bound on the
variance of 1.

As explained just after equation (6) in \citep{pishchagina2025online},
the hull argument holds even when the set of parameters is constrained.
In terms of code, it means that we only need to implement an alternative
\texttt{get\_statistics} function to run this test, and the detector
update will be the one of the standard change in mean and variance
model.

In detail, for a Gaussian change in mean and variance we need to get the
points on the hull of all \((t, \sum_{i=1}^t x_i, \sum_{i=1}^t x^2_i)\)
for \(t<n.\) Now, assuming \(t\) is on the hull, we first estimate the
variance on the segment before the change (going from \(1\) to \(t\)).
Our variance estimator is simply

\[
\hat{\sigma}^2_{1:t} = \min \left\{ \frac{1}{t}\left[\sum_{i=1}^t x^2_i - \frac{1}{t} \left(\sum_{i=1}^t x_i\right)^2\right], 1 \right\}
\]

and from that we get the likelihood as

\[
t\log(\hat{\sigma}^2_{1:t}) + \frac{\sum_{i=1}^t x^2_i - (1/t) (\sum_{i=1}^t x_i)^2}{\hat{\sigma}^2_{1:t}}.
\] In code we get, for each segment defined by the candidate \(\tau\)
and its corresponding summary statistics, the following function to
compute the likelihood:

\begin{verbatim}
R> get_seg_meanvar <- function(info_seg, min_var=1){
+    if(info_seg[1] > 0){
+      est_mse <-  info_seg[3] - info_seg[2]^2/info_seg[1]
+      est_var <- est_mse/info_seg[1]
+      if(est_var < min_var) est_var <- min_var
+      return(info_seg[1]*log(est_var) + est_mse/est_var)
+      } else { return(Inf) }
+  }
\end{verbatim}

We proceed similarly to obtain the likelihood of the right segment,
going from \(t+1\) to \(n\) by replacing in the formula \(t\) by
\(n-t\), \(\sum_{i=1}^t x_i\) by \(\sum_{i=1}^n x_i - \sum_{i=1}^t x_i\)
and \(\sum_{i=1}^t x^2_i\) by
\(\sum_{i=1}^n x^2_i - \sum_{i=1}^t x^2_i.\) This leads to the full cost
function:

\begin{verbatim}
R> library(purrr)
R> get_stat_meanvar <- function(det_ptr, min_var=1) {
+    mat <- detector_candidates(det_ptr)
+    n <- detector_info_n(det_ptr)
+    all <- detector_info_sn(det_ptr)
+    stats <- map_dbl(seq_along(mat$tau)[-1], \(i) {
+      info_seg_left <- c(mat$tau[i], mat$st[[i]])
+      
+      info_seg_right <- c(n, all) - info_seg_left
+      return(
+          - get_seg_meanvar(info_seg_left, min_var) -
+          get_seg_meanvar(info_seg_right, min_var)  +
+          get_seg_meanvar(info_seg_left + info_seg_right, min_var)
+      )
+    })
+    max(stats)
+  }
\end{verbatim}

Then for training and testing we first recover all the Cleveland
Cavaliers game from 2000 to 2025.

Then to calibrate our threshold, we (1) resample uniformly at random 640
games prior to 2010 (each NBA regular season is made up of 64 games),
(2) compute the test statistics at all time-points, and (3) recover the
maximum.

We repeated this a thousand times and picked our threshold as the 99\%
quantile.

\begin{verbatim}
R> library(furrr)
R> plan(multisession, workers = 8)
R> 
R> dat_past <- dat %>% filter(yearSeason <= 2010, typeSeason == "Regular Season")
R> 
R> y <- future_map(1:10^3, \(i) {
+    plusminus_team <- sample(dat_past$plusminusTeam, 1000, replace = TRUE)
+    dat_for_focus <- rbind(plusminus_team, plusminus_team^2)
+    
+    det <- detector_create(type = "multivariate")
+    stat_record <- vector("numeric", length = ncol(dat_for_focus))
+    for (i in 1:ncol(dat_for_focus)) {
+      detector_update(det, dat_for_focus[, i])
+      stat_record[i] <- get_stat_meanvar(det)
+    }
+    # max_stat_record
+    max(stat_record)
+  }, .options = furrr_options(seed = TRUE))
R> results <- do.call(c, y)
R> 
R> threshold_99 <- quantile(results, 0.99)
R> 
R> print(threshold_99)
\end{verbatim}

\begin{verbatim}
     99% 
32.79621 
\end{verbatim}

We then test the detector on the data from 2011 onwards:

\begin{verbatim}
R> dat_recent <- dat %>% filter(yearSeason > 2010, typeSeason == "Regular Season")
R> Y_test <- rbind(dat_recent$plusminusTeam, dat_recent$plusminusTeam^2)
R> det <- detector_create(type = "multivariate")
R> stat_record <- vector("numeric", length = ncol(Y_test))
R> for (i in 1:ncol(Y_test)) {
+    detector_update(det, Y_test[, i])
+    stat_record[i] <- get_stat_meanvar(det, min_var=1)
+  }
\end{verbatim}

And plot both data and statistics (as above) in Figure
\ref{fig-nba-application}.

\begin{figure}

\centering{

\includegraphics{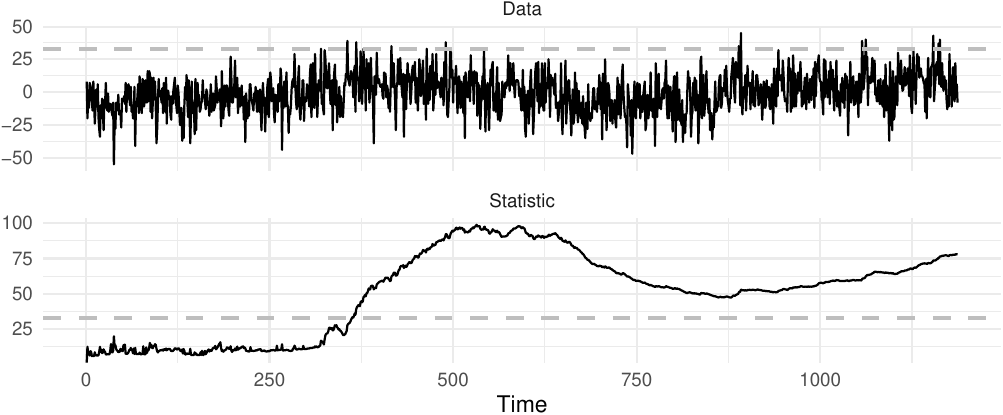}

}

\caption{\label{fig-nba-application}The NBA plus-minustest data (top)
and the resulting test statistics (bottom). Dashed gray line indicates
the Monte Carlo Threshold.}

\end{figure}%

As discussed in \citep{pishchagina2025online} and
\citep{shin2023detectors} we detect a change in the 2014-2015 season
corresponding to the return of James Lebron in the Cleveland Cavaliers.

\subsection{Gamma-Ray Burst Detection}\label{sec_grb}

Gamma-ray bursts (GRBs) are intense flashes of gamma rays. Their
detection is crucial for understanding astrophysical phenomena, and fast
and accurate online detection is essential for follow-up observations.
Here, we demonstrate how the package can accurately identify the
significance of a gamma-ray burst in publicly available data from the
Fermi-GBM (Gamma-ray Burst Monitor) instrument \citep{meegan2009fermi}.
As demonstrated in \citep{ward2022poisson, dilillo2024gamma} the
advantage of using a \pkg{focus}-based trigger algorithm compared to
traditional thresholding methods, is that it identifies the most
significant interval in the gamma-ray burst as it happens, without
requiring knowledge of the true duration in advance and any
computational tradeoffs.

The data we use is from the Fermi-GBM triggers catalogue for burst
GRB250814432. The exact data file can be found on
\href{https://heasarc.gsfc.nasa.gov/FTP/fermi/data/gbm/triggers/2025/bn250814432//current/}{this
page}. The file contains time-to-event data spanning 10 minutes. We bin
it into 10 millisecond intervals and examine the period around the
gamma-ray burst. To detect the burst online, we initialise a detector
and stream the data through it, updating the detector with each new
observation. For the sake of this example, we estimate the pre-change
parameter \(\theta_0\) over a fixed length of a minute prior to the
start of the sequence.

\begin{verbatim}
>>> data_window = data_full['2025-08-14 10:01':'2025-08-14 10:02']
... theta0_est = np.mean(data_full['2025-08-14 10:00':'2025-08-14 10:00'])
... detector = focus_cpt.Detector(type="univariate_one_sided",
...                               side="right",
...                               anomaly_intensity = 1.5)
... stats_list = []
... for i, value in enumerate(data_window):
...     detector.update(value)
...     result = detector.get_statistics(family="poisson", theta0=theta0_est)
...     stats_list.append(result['stat'])
... stat_array = np.array(stats_list)
... significance = np.sqrt(2 * stat_array)
\end{verbatim}

The significance transformation, as mentioned in Equation 1 of
\citep{dilillo2024gamma}, is necessary to obtain a more interpretable
measure of the detection statistic for physicists that can be directly
compared to standard significance thresholds (e.g., 5-sigma). The plot
in Figure \ref{fig-gamma-ray-burst} shows the count rate and the running
test statistic in units of \(\sigma\), with a vertical line indicating
the first point at which the statistic exceeds the 5-sigma threshold.

\begin{figure}

\centering{

\includegraphics{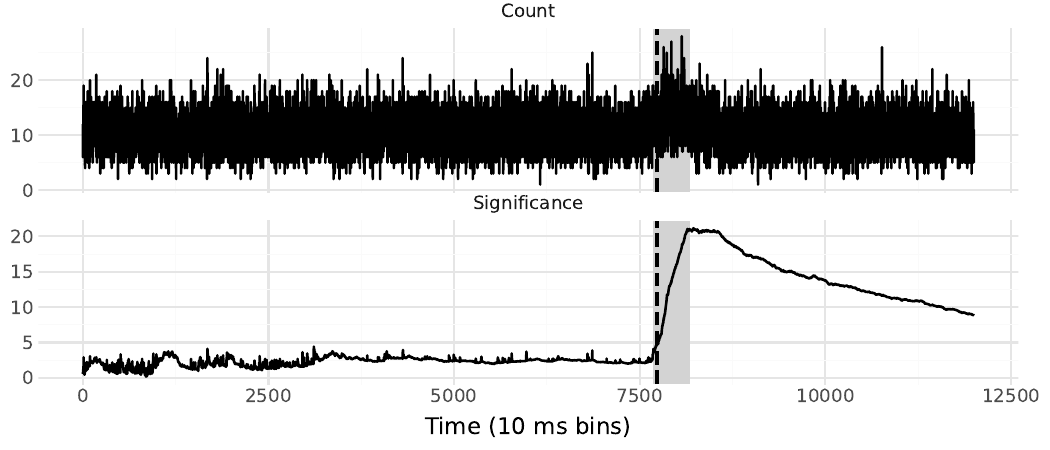}

}

\caption{\label{fig-gamma-ray-burst}Online detection of a gamma-ray
burst from Fermi-GBM data. Top: radiation count rate; Bottom: running
test statistic in units of \(\sigma\) with a vertical line indicating
the first point exceeding the 5-sigma threshold. The shaded region
highlights the interval of most significant detection.}

\end{figure}%

\subsection{Constrained Up-Down Model for Spike
Inference}\label{sec_calcium}

We now present an application to calcium imaging data, based on the
\texttt{spikefinder} challenge data
(\url{https://github.com/codeneuro/spikefinder}). This data measures
fluorescence, relating to the intensity of calcium ions in neurons, and
is sampled at a high-frequency of 100 Hz. In calcium imaging, each
neuron firing (spike) produces a sharp upward jump followed by an
exponential decay. Changepoint detection approaches have been
successfully employed to analyse calcium imaging data in the offline
setting, yielding computationally efficient algorithms for spike
inference \citep{jewell2018exact, jewell2020fast, fleming2021inferring}.
However, there is increasing interest in performing equivalent analyses
in real time. In-vivo calcium imaging now enables closed-loop
experiments in which neural activity is decoded and available online
\citep{hira2024closed, zhang2025real}. These applications require
low-latency processing of fluorescence signals, whereas most existing
pipelines operate offline and incur substantial delays
\citep{chen2023fpga}, motivating the need to perform spike inference for
real-time streaming data. The original data is a trace of fluorescence
intensity over time, which we will use to infer the underlying spike
times; see Figure \ref{fig-calcium_trace}. Although the
\texttt{spikefinder} dataset is offline in nature, the use of in-vitro
recordings provides a controlled validation setting, as the true spike
times are known from simultaneous electrophysiological recordings. This
allows us to tune the thresholds of the two detectors and evaluate their
performance.

To analyse this sequence in a sequential fashion via our package, we use
two one-sided detectors simultaneously: one watching for up-changes
(spikes) and one for down-changes (decay onset), resetting both whenever
either detects a change. The rationale behind this is that only
up-changes are of interest, and that given the decay process, it would
be beneficial to have two separate thresholds to better control the
process. Below, we define a helper function that implements the
two-detectors logic, as well as records the detected spike times, the
change times, and the statistics of both detectors at each time step:

\begin{verbatim}
>>> def run_detector(Y, time, r_threshold, l_threshold):
...     r_det = Detector(type='univariate_one_sided', side='right')
...     l_det = Detector(type='univariate_one_sided', side='left')
...     up_detections = []
...     stp_times     = []
...     stp_types     = []
...     for t, y in enumerate(Y):
...         r_det.update(y)
...         l_det.update(y)
...         r_s = r_det.get_statistics(family='gaussian')['stat']
...         l_s = l_det.get_statistics(family='gaussian')['stat']
...         cpt_found = False
...         if r_s > r_threshold:
...             cpt_found, cpt_type = True, 'right'
...         elif l_s > l_threshold:
...             cpt_found, cpt_type = True, 'left'
...         if cpt_found:
...             if cpt_type == 'right':
...                 up_detections.append(t)
...             stp_times.append(time[t])
...             stp_types.append(cpt_type)
...             r_det = Detector(type='univariate_one_sided', side='right')
...             l_det = Detector(type='univariate_one_sided', side='left')
...     return {
...         'up_detections': np.array(up_detections),
...         'stp_times'    : stp_times,
...         'stp_types'    : stp_types,
...     }
\end{verbatim}

To give a brief intuition on the function above, at each time step, we
update both detectors with the new observation, and we check if either
of them has detected a change by comparing their statistics to the
corresponding thresholds: \texttt{r\_threshold}, for up-changes, and
\texttt{l\_threshold},for down-changes. If a change is detected, we
record the time and type of the change, and we reset both detectors. The
up-changes are recorded separately as they correspond to the spikes we
want to detect. The two thresholds are passed as arguments to the
function, as this will allow us to tune them via cross validation. To do
so, we create a 30/70 \% train test split. Then, both thresholds are
selected jointly by 5-fold cross-validation on the training set. As a
criterion for the goodness of fit, we choose to minimise van Rossum
distance between the detected and true spike trains. This is employed as
a detection during the calcium decay, e.g.~when we are far from any true
spike, is penalised more than a miss. Furthermore near-burst detections
are scored continuously rather than with a hard cutoff, so the metric
reflects the biological reality that not all spikes in a burst may be
detected, but detecting some of them is still useful
\citep{jewell2020fast}.

Using the CV-selected thresholds \texttt{best\_r\_threshold} and
\texttt{best\_l\_threshold}, we evaluate on the held-out test set by
calling \texttt{run\_detector} directly. To illustrate the results, we
plot the results in Figure \ref{fig-calcium_trace}, showing the detected
changepoints and the trace of the right-detector statistic.

\begin{verbatim}
>>> res_test = run_detector(Y_test, time_test,
...                         best_r_threshold, best_l_threshold)
\end{verbatim}

\begin{figure}

\centering{

\includegraphics{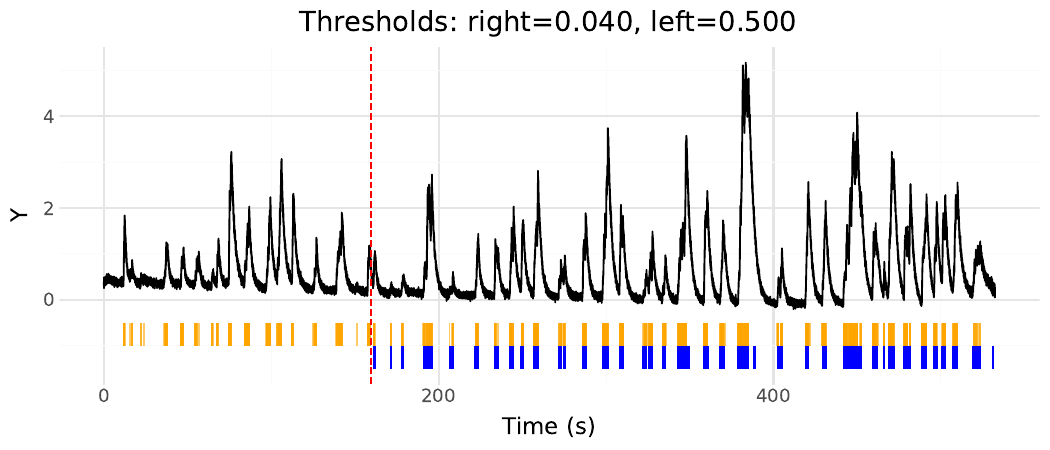}

}

\caption{\label{fig-calcium_trace}The trace of fluorescence intensity
over time from a calcium imaging recording of our example. The orange
segments indicate the true spike times obtained from simultaneous
electrophysiological recordings, while the blue segments the detected up
changes. Dashed red line indicates the start of the test period.}

\end{figure}%

\section{Acknowledgments}\label{acknowledgments}

EPSRC's support via grant UKRI2698 (Fearnhead, Eckley and Romano) and
EP/Z531327/1 (Eckley and Fearnhead) is acknowledged. Fan also
acknowledges the financial support of EPSRC and BT via an Industrial
CASE award.

\section*{References}\label{references}
\addcontentsline{toc}{section}{References}

\renewcommand{\bibsection}{}
\bibliography{bibliography.bib}

\newpage{}

\newpage
\appendix
\begin{center}
{\large\bf APPENDIX}
\end{center}

\section{The Python Interface: Differences and
Similarities}\label{sec_python_interface}

The \proglang{Python} package \texttt{{[}focus-cpt{]}\{.pkg\}} provides
a clean, Pythonic interface to the same underlying C++ implementation as
the \proglang{R} package. The core philosophy is to mirror the
\proglang{R} interface as closely as possible, while leveraging
\proglang{Python} idioms and conventions where appropriate.

The primary structural difference between the two interfaces lies in the
handling of the detector object. In \proglang{R}, the detector is an
external pointer that must be passed to functions like
\texttt{detector\_update} and \texttt{get\_statistics}. In
\proglang{Python}, the detector is an instance of the \texttt{Detector}
class, and operations are performed as methods on this object. This is
more idiomatic in \proglang{Python} and aligns with the object-oriented
philosophy of the language.

Specifically:

\begin{itemize}
\item
  \textbf{Initialization}: In \proglang{R},
  \texttt{detector\_create(type\ =\ "univariate")} creates a detector.
  In \proglang{Python}, this is \texttt{Detector(type="univariate")}.
\item
  \textbf{Updating}: In \proglang{R}, \texttt{detector\_update(det,\ y)}
  updates the detector. In \proglang{Python}, this is
  \texttt{det.update(y)}.
\item
  \textbf{Computing statistics}: In \proglang{R},
  \texttt{get\_statistics(det,\ family\ =\ "gaussian")} computes
  statistics. In \proglang{Python}, this is
  \texttt{det.get\_statistics(family="gaussian")}.
\item
  \textbf{Extracting information}: Functions like
  \texttt{detector\_info\_n}, \texttt{detector\_info\_sn}, and
  \texttt{detector\_cands\_len} in \proglang{R} are available as methods
  \texttt{get\_n()}, \texttt{get\_sn()}, and
  \texttt{get\_n\_candidates()} in \proglang{Python}. Similarly,
  \texttt{detector\_candidates} is available as the method
  \texttt{get\_candidates()}.
\end{itemize}

Beyond the object-oriented interface, the \proglang{Python} bindings
follow standard \proglang{Python} conventions:

\begin{itemize}
\item
  \proglang{Python} uses \texttt{None} instead of R's \texttt{NULL}. For
  example, \texttt{get\_statistics(family="gaussian",\ theta0=None)} is
  equivalent to the R call
  \texttt{get\_statistics(det,\ family\ =\ "gaussian",\ theta0\ =\ NULL)}.
\item
  Numeric vectors and matrices in \proglang{Python} are typically
  represented as NumPy arrays. The package should automatically handle
  conversion of lists and other array-like objects to NumPy arrays. Note
  that for univariate detectors, the \texttt{focus\_offline} function
  returns a 2D statistic array with shape \texttt{(n\_obs,\ 1)} rather
  than a 1D array, which may need to be flattened before further
  processing (see the offline example below).
\item
  Functions return dictionaries (\proglang{Python}'s equivalent to
  \proglang{R}'s lists) rather than named lists. For example,
  \texttt{det.get\_statistics(family="gaussian")} returns a dictionary
  with keys \texttt{\textquotesingle{}stopping\_time\textquotesingle{}},
  \texttt{\textquotesingle{}changepoint\textquotesingle{}}, and
  \texttt{\textquotesingle{}stat\textquotesingle{}}.
\end{itemize}

\subsubsection{Example Usage in Python}\label{example-usage-in-python}

To better highlight the \proglang{Python} syntax, we replicate the
example at the beginning of the section, noting that results will not be
exactly the same due to differences in random number generation from
numpy and \proglang{R}:

\begin{verbatim}
>>> import numpy as np
... from focus_cpt import Detector
... 
... 
... np.random.seed(42)
... Y = np.concatenate([np.random.randn(100), np.random.randn(50) + 1])
... 
... det = Detector(type="univariate")
... 
... for i, y in enumerate(Y):
...     det.update(y)
...     result = det.get_statistics(family="gaussian")
...     
...     if result['stat'] > 20:
...         print(
...             f"Changepoint detected at time {i}: "
...             f"estimated changepoint at {result['changepoint']}"
...         )
...         break
\end{verbatim}

\begin{verbatim}
Changepoint detected at time 116: estimated changepoint at 104
\end{verbatim}

The offline interface is also available in \proglang{Python} with the
same signature as in \proglang{R}, again with the \texttt{None}
convention for optional parameters. For example:

\begin{verbatim}
>>> import pandas as pd
... from plotnine import *
... from focus_cpt import focus_offline
... 
... result_offline = focus_offline(Y,
...                                 threshold=np.inf,
...                                 type="univariate",
...                                 family="gaussian")
... 
... stat = result_offline['stat'].flatten()
... 
... df = pd.DataFrame({"time": range(1, len(Y) + 1), "stat": stat})
... (
...   ggplot(df) +
...     aes(x="time", y="stat") +
...     geom_line() +
...     labs(x="Time", y="Statistic") +
...     theme_minimal() +
...     theme(figure_size=(7, 3))
... )
\end{verbatim}

\begin{figure}[H]

\centering{

\includegraphics{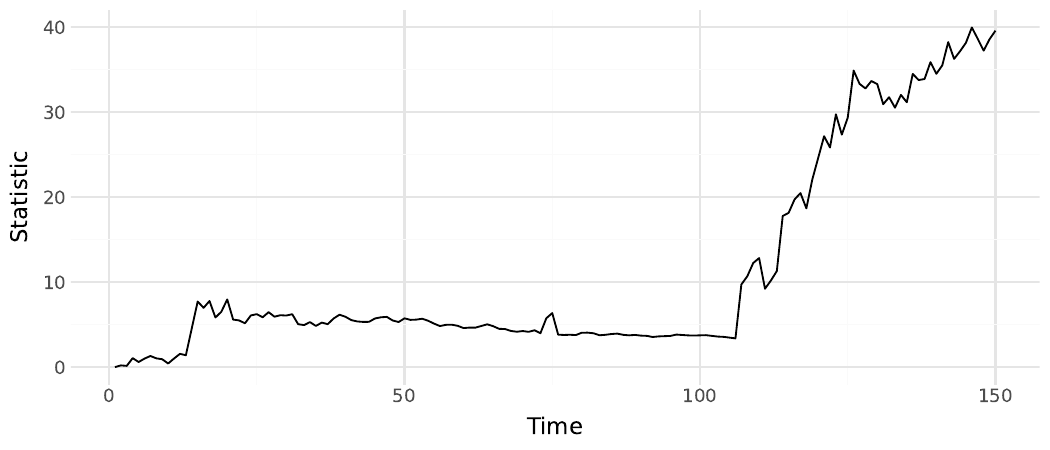}

}

\caption{\label{fig-python-offline}The trace of the statistics for one
simulated time series using the Python interface.}

\end{figure}%

\end{document}